\documentclass{article}
\usepackage[margin=1in]{geometry}
\usepackage[onehalfspacing]{setspace}
\usepackage{amsmath,mathtools}
\usepackage{pgfplots}
\usepackage{subcaption}
\usepackage{booktabs}
\usepackage[braket, qm]{qcircuit}
\usepackage{graphicx}
\usepackage{authblk}
\usepackage{xcolor}
\usepackage{booktabs}
\usepackage{subcaption}
\usepackage[linesnumbered,ruled,vlined]{algorithm2e}
\usepackage{qcircuit}

\SetCommentSty{mycommfont}

\SetKwInput{KwInput}{Input}                
\SetKwInput{KwOutput}{Output}              
\SetKwProg{Fn}{Function}{}{}

\usepackage[english]{babel}

\usepackage{pdflscape}
\usepackage{fancyvrb}
\usepackage{calc}
\usepackage{rotating}
\usepackage{pgf,tikz}
\usepackage{mathrsfs}
\usetikzlibrary{arrows}
\usepackage{mathtools}
\DeclarePairedDelimiter\ceil{\lceil}{\rceil}

\usepackage{listings}
\usepackage[]{qcircuit}
\usepackage{braket}
\usepackage{mathtools}
\usepackage{varwidth}
\usepackage{caption}
\usepackage{subcaption}
\usepackage{graphicx}
\usepackage[export]{adjustbox}

\usepackage{float}

\makeatletter
\def\paragraph{\@startsection{paragraph}{4}%
	\z@\z@{-\fontdimen2\font}%
	{\normalfont\bfseries}}
\makeatother

\usepackage{graphicx}
\setcounter{secnumdepth}{3}
\setcounter{tocdepth}{3}
\usepackage{pgffor}
\usepackage{tikz,tikz-cd}
\usetikzlibrary{backgrounds,fit,decorations.pathreplacing}  
\usepackage{booktabs}
\usepackage{enumerate}

\usepackage{amssymb, amsmath, amsthm}

\usepackage{xypic}
\usepackage{scalerel}
\usepackage{ulem}


\usepackage{graphicx}              
\usepackage{amsmath}               
\usepackage{amsfonts}              
\usepackage{amsthm}                
\usepackage{xkeyval}               
\usepackage{amssymb}
\usepackage{enumerate}             
\usepackage{color}
\usepackage{breqn}                 
\usepackage{bm}                    
\usepackage{cite}

\overfullrule = 1cm

\allowdisplaybreaks[1]
\usepackage{nicefrac}
\usepackage{booktabs}

\usepackage{comment}

\usepackage{kpfonts}
\usepackage{stackengine}
\usepackage{calc}
\newlength\shlength
\newcommand\xshlongvec[2][0]{\setlength\shlength{#1pt}%
	\stackengine{-5.6pt}{$#2$}{\smash{$\kern\shlength%
			\stackengine{7.55pt}{$\mathchar"017E$}%
			{\rule{\widthof{$#2$}}{.57pt}\kern.4pt}{O}{r}{F}{F}{L}\kern-\shlength$}}%
	{O}{c}{F}{T}{S}}

\usepackage{hyperref}
\hypersetup{
	colorlinks   = true,    
	citecolor    = red      
}



\newcommand{\RN}[1]{%
	\textup{\uppercase\expandafter{\romannumeral#1}}%
}



\allowdisplaybreaks


\newcommand{\meqref}[1]{\text{Eq}.~\eqref{#1}}
\newcommand{\mref}[1]{Sec.~$ \!\ref{#1} $}
\newcommand{\mfig}[1]{Fig.~$ \!\ref{#1} $}

\newtheorem{thm}{Theorem}[subsection]

\newtheorem{lem}[thm]{Lemma}

\newtheorem{example}[thm]{Example}


\arraycolsep=3pt



 

\def\NN{{\mathbb N}}





\def\<{\langle}
\def\>{\rangle}









\numberwithin{equation}{section}

\newcommand{\abs}[1]{\lvert#1\rvert}


\usepackage{pgfplots}

\makeatletter
\def\smallunderbrace#1{\mathop{\vtop{\m@th\ialign{##\crcr
				$\hfil\displaystyle{#1}\hfil$\crcr
				\noalign{\kern3\p@\nointerlineskip}%
				\tiny\upbracefill\crcr\noalign{\kern3\p@}}}}\limits}
\makeatother

\begin{document}

	\title{An efficient quantum algorithm for preparation of uniform quantum superposition states}

	\author[1]{Alok Shukla \thanks{Corresponding author.}}
	\author[2]{Prakash Vedula}
	\affil[1]{School of Arts and Sciences, Ahmedabad University, India}
	\affil[1]{alok.shukla@ahduni.edu.in}
	\affil[2]{School of Aerospace and Mechanical Engineering, University of Oklahoma, USA}
	\affil[2]{pvedula@ou.edu}
	

	\maketitle


\begin{abstract}
Quantum state preparation involving a uniform superposition over a non-empty subset of $n$-qubit computational basis states is an important and challenging step in many quantum computation algorithms and applications. 
In this work, we address the problem of preparation of a uniform superposition state of the form  $\ket{\Psi} = \frac{1}{\sqrt{M}}\sum_{j = 0}^{M - 1} \ket{j}$, where $M$ denotes the number of distinct states in the superposition state and $2 \leq M \leq 2^n$.  We show that the superposition state $\ket{\Psi}$ can be efficiently prepared with a gate complexity and circuit depth of only $O(\log_2~M)$ for all $M$. 
This demonstrates an exponential reduction in gate complexity in comparison to other existing approaches in the literature for the general case of this problem. Another advantage of the proposed approach is that it requires only $n=\ceil{\log_2~M}$ qubits. Furthermore, neither ancilla qubits nor any quantum gates with multiple controls are needed in our approach for creating the uniform superposition state $\ket{\Psi}$.
It is also shown that a broad class of nonuniform superposition states that involve a mixture of uniform superposition states can also be efficiently created with the same circuit configuration that is used for creating the uniform superposition state $\ket{\Psi}$ described earlier, but with modified parameters.

\end{abstract}
	
		\section*{Author Contributions}
	Authors Alok Shukla and Prakash Vedula collaborated on developing the initial concept and an early version of the algorithm. Alok then formulated the main algorithm (Algorithm 1 in the paper), which achieved an exponential speedup over the initial version. He also wrote the first draft of the paper, including Algorithm 1, its explanation, and the proof, along with contributing the results on nonuniform superposition in Section 3. Alok also contributed to the final review of the paper. Prakash reviewed the paper and contributed to Section 2.5, as well as writing parts of the Introduction and Conclusion sections.

\section{Introduction}\label{sec:intro}
    Quantum state preparation is important for many applications in quantum computing and quantum information processing ~\cite{nielsen2002quantum,wittek2014quantum,
kieferova2019simulating,
low2017optimal,
berry2015simulating,
childs2017quantum,
wiebe2012quantum,
childs2020quantum,
SHUKLA2023127708,
Shukla_Vedula_2022b}. It involves creation of specific quantum states that are often in superposition or entanglement, so that the benefits of quantum computation and quantum information processing over classical counterparts can be realized. 
    Preparation of arbitrary quantum states with $n$ qubits generally involves an exponential number of CNOT gates $O(2^n)$. Besides gate complexity, circuit depth also has exponential scaling with the number of qubits. 
    
    The uniform superposition states that involve the full set of computational basis states (i.e., $\ket{\Psi} = \frac{1}{\sqrt{M}} \sum_{j=0}^{M-1} \ket{j}$, where  $M=2^n$) play important roles in several quantum algorithms and often serve as a starting point for implementing these algorithms. Some examples and applications include  Deutsch-Jozsa algorithm \cite{deutsch1992rapid}, Bernstein-Vazirani algorithm \cite{bernstein1993quantum} and its probabilistic generalization \cite{Shukla_Vedula_PBV_2023},  Grover's quantum search algorithm \cite{grover1997quantum, shukla2019trajectory}, quantum phase estimation,
    Simon's algorithm \cite{simon1997power}, Shor's algorithm \cite{shor1999polynomial} etc. We note that the preparation of the uniform superposition state $\ket{\Psi} = \frac{1}{\sqrt{M}} \sum_{j=0}^{M-1} \ket{j}$, with  $M=2^n$, is  straightforward as it can be prepared using $n = \log_2~M$  Hadamard gates.
    
    Uniform superposition over a particular subset $S$ of computational basis states as $\ket{\Psi} = \frac{1}{\sqrt{M}} \sum_{j \in S} \ket{j}$, is also of interest in many applications and can be useful in generalization of some of the algorithms mentioned above to the cases where $M\neq 2^n$. For example, the amplitude amplification algorithm, which is a generalization of Grover's quantum search algorithm, can also work when $M \neq 2^n$, and the authors suggest that quantum Fourier transform
    can be used to an equal superposition in such cases (Ref.~\cite{Brassard_2002}, Page 3, fourth paragraph therein). 
    Further, we note that in \cite{Shukla_Vedula_PBV_2023}, a generalized version of the Bernstein-Vazirani algorithm was provided for determining multiple secret keys through a probabilistic oracle. The number of secret keys to be determined, say $M$, may or may not be of the form $M = 2^r$, with $r \in \NN$, therefore the preparation of a uniform superposition state $\ket{\Psi} = \frac{1}{\sqrt{M}} \sum_{j =0}^{M} \ket{j}$ for a positive integer $M$, with $2 \leq  M \leq 2^n$, is needed to implement the probabilistic oracle in the general case. 
     Another relevant example is the Quantum Byzantine Agreement (QBA) protocol. We note that the QBA protocol is important in distributed computing as it addresses the problem of achieving consensus among a group of quantum nodes even when some nodes exhibit arbitrary or malicious behavior. The  Quantum Byzantine Agreement (QBA) protocol involves the preparation of the GHZ state and another uniform superposition state of the form $  \ket{\Psi} = \frac{1}{\sqrt{n^3}} \sum_{j=0}^{n^3 - 1} \ket{j} $  \cite{ben2005fast}. Another important application of creating the state $\ket{\Psi}$ is the generation of random numbers. We note that the generation of genuine randomness with classical means is considered impossible, whereas     
     by exploiting the inherent probabilistic nature of quantum computing, genuine randomness can be achieved. The generation of randomness is important in cryptography, simulations, and many other scientific applications. 
     In this paper, we consider the problem of state preparation of such a uniform superposition state $\ket{\Psi}$.
For the sake of convenience, we will assume that the subset $S$ contains the basis states \{$\ket{0}$, $\ket{1}$, \ldots $\ket{M - 1 }$ \}, where $2<M<2^n$ for a given $n \in \NN$. Hence, our main objective is to develop an efficient approach for the preparation of a uniform quantum superposition state of the form $\ket{\Psi} = \frac{1}{\sqrt{M}} \sum_{j =0 }^{M-1} \ket{j}$. While such a state can be efficiently prepared via a straightforward approach for cases where $M=2^r$ using $r = \log_2~M$ Hadamard gates, there are no efficient approaches known in the literature for the case when $M$ is arbitrary, especially when $M\neq 2^r$. 
The approaches presented in previous works require the gate complexity of $O(M)$ for the preparation of such states in the general case \cite{gleinig2021efficient, mozafari2021efficient, Qiskit}.

In this paper, we propose an efficient approach for quantum state preparation of uniform superposition state $\ket{\Psi}= \frac{1}{\sqrt{M}} \sum_{j =0 }^{M-1} \ket{j}$ that offers a significant (exponential) reduction in gate complexity and circuit depth without the use of ancillary qubits. We show that using only $n=\ceil{\log_2 ~M}$ qubits, the uniform superposition state $\ket{\Psi}$ can be prepared for arbitrary $M$ with a gate complexity and circuit depth of $O(\log_2 ~M)$.
Additionally, our proposed method in Algorithm \ref{alg_uniform_superposition} does not require quantum gates with multiple controls. Instead, only specific combinations of single qubit gates such as Pauli X gates, Hadamard gates, and rotation ($R_Y(\theta)$) gates, along with controlled gates (namely controlled Hadamard gates and controlled rotation gates) with a single control qubit are used. We observe that the controlled Hadamard gates and controlled rotation gates can be implemented using CNOT gates and a few single qubit gates. 
We demonstrate that (in the general case) our proposed approach achieves an exponential reduction in the number of CNOT gates needed compared to the Qiskit implementation (ref.~Table \ref{tab:quiskit}, \mfig{fig:mcases_all} and \mfig{fig:usp_cnot}).

Further, we show that this exponential reduction in complexity can also be extended to address the problem of quantum state preparation of a broad class of nonuniform superposition states that involve a mixture of uniform superpositions over multiple subsets of basis states. In other words, the quantum circuit configurations with $O(\log_2 ~M)$ gate complexity and circuit depth used for the generation of uniform superposition states $\ket{\Psi}$ for any given $M$ can be reused with
modified parameters or rotation angles (associated with rotation gates and controlled rotation gates) to generate a broad class of nonuniform superposition states.

The rest of this paper is organized as follows. 
A quantum algorithm for the preparation of uniform superposition state $\ket{\Psi} = \frac{1}{\sqrt{M}}\sum_{j = 0}^{M - 1} \ket{j}$, where $ M $ is a positive integer with $ 2 < M < 2^n $ and $ M \neq 2^r $ for any $ r \in \NN $, is given in \mref{sec:algo}. A detailed explanation of Algorithm \ref{alg_uniform_superposition} is provided in \ref{sec:explanation}.
    In \mref{sec_correctness_algo_one}, the correctness of Algorithm \ref{alg_uniform_superposition} is established.
Example quantum circuits are provided in \mref{sec:examples_usp} to illustrate how Algorithm \ref{alg_uniform_superposition} works. A detailed analysis of the complexity of Algorithm \ref{alg_uniform_superposition} is provided in \mref{sec:complexity_usp}. Quantum state preparation of a class of nonuniform superposition states using a variation of Algorithm \ref{alg_uniform_superposition} is described in \mref{sec:nusp} and some illustrative examples along with relevant quantum circuits are given in \mref{sec:nusp_examples}. Finally, the conclusion of the article is summarized in  \mref{sec:conclusion}.

\section{Uniform superposition}
\label{sec:usp}

\subsection{Algorithm}	\label{sec:algo}
One of the main objectives of this work is to consider the problem of preparation of the uniform quantum superposition state $ \frac{1}{\sqrt{M}} \sum_{j \in S}  \ket{j} $, where $ S $ is a subset of $n$-qubit basis states of the cardinality $ M $, with $2 \leq M \leq 2^n$.  
If $M = 2^r$, for $ 1 \leq r \leq n$, then one can use $r$ Hadamard gates to create the desired uniform superposition state. Therefore, in the rest of the paper, we assume that $ 2 < M < 2^n $ and $ M \neq 2^r $ for any $ r \in \NN $. Moreover, it will be convenient to consider $S$ to be the subset $ \{\ket{0}, \ket{1}, \ldots, \ket{M-1}  \}$. Therefore, to summarize, our goal is to prepare the uniform superposition state $\ket{\Psi} = \frac{1}{\sqrt{M}}\sum_{k = 0}^{M - 1} \ket{k}$, where $ 2 < M < 2^n $ and $ M \neq 2^r $ for any $ r \in \NN $. In other words, we want to create a quantum circuit using elementary quantum gates whose action can be represented as the unitary operator $\operatorname{U}$ such that $\operatorname{U} \ket{0}^{\otimes n} = \ket{\Psi} $. 

In Algorithm \ref{alg_uniform_superposition}, a quantum circuit to prepare the desired uniform superposition state $\ket{\Psi} = \frac{1}{\sqrt{M}}\sum_{k = 0}^{M - 1} \ket{k}$ is provided. 
We note that Algorithm \ref{alg_uniform_superposition} can create the uniform superposition state $\ket{\Psi} = \frac{1}{\sqrt{M}}\sum_{j = 0}^{M - 1} \ket{j}$ extremely efficiently by using only $\ceil{\log_2~ M}$ qubits.
It employs Hadamard ($H$), controlled Hadamard, rotation ($R_{Y}(\theta)$)  and controlled rotation gates in the general case. 
We note that the operator $ R_{Y}(\theta) $ represents the rotation (through an angle $\theta  $) about the $ Y $-axis of the Bloch sphere representation. The unitary matrix corresponding to this operator is
	\[
	R_{Y}(\theta) = 
	\begin{bmatrix}
		\cos {\frac{\theta}{2}} & - \sin {\frac{\theta}{2}} \\ 
		&	 \\
		\sin {\frac{\theta}{2}} & \cos {\frac{\theta}{2}}
	\end{bmatrix}.
	\]  

\begin{algorithm}[H] \label{alg_uniform_superposition}
	\DontPrintSemicolon
	\KwInput{A positive integer $ M $ with $ 2 < M < 2^n $ and $ M \neq 2^r $ for any $ r \in \NN $. }
	\KwOutput{A quantum state $ 	U_{M} \ket{0}^{\otimes n}  =    \frac{1}{\sqrt{M}}  \,  \sum_{j=0}^{M-1} \, \ket{j}, $ that is in a uniform superposition of $ M $ distinct states.} 
	\Fn{Uniform (M)}{
        \tcc{ $  l_0, l_1, \ldots\,,l_k  $ is an ordered sequence of numbers representing the locations of $ 1 $ in the reverse binary representation of $ M $.}
		Compute $  l_0, l_1, \ldots\,,l_k  $, where $ M = \sum_{j=0}^{k} \, 2^{l_j} $ with $ 0 \leq l_0 < l_1 < \ldots < l_{k-1} < l_k \leq n-1 $. \\
		Initialize $ \ket{\Psi} =   \ket{q_{n-1}  } \otimes \ket{q_{n -2 }} \otimes \cdots \cdots \otimes \ket{q_{1}} \otimes \ket{q_{0}}  = \ket{0}^{\otimes n} $.\\
		Apply $ X $ gate on $ \ket{q_{i}} $ for $ i = l_1$, $ l_2$, $ \ldots $, $ l_k $. \tcp*{Apply $ X $ gates on  qubits at positions $ l_1 $, $ l_2$, $ \ldots $, $ l_k $. }
		Set $ M_0 = 2^{l_0} $. \\
        \tcc{If $ M $ is an even number, then apply Hadamard gates on the rightmost $l_0$ qubits.} 
		\If{ $ l_0 > 0  $} 
		{Apply $ H $ gate on  $ \ket{q_{i}} $ for $ i = 0$, $ 1$, $ \ldots $, $ l_0 -1 $.}
		Apply the rotation gate $ R_Y(\theta_0) $ on $ \ket{q_{l_1}}$, where  $ \theta_0 = - 2 \arccos \left(\sqrt {\frac{M_0}{M}}\right)  $. \\
		Apply a controlled Hadamard ($ H $) gate on $ \ket{q_{i}} $ for $ i = l_0$, $ l_0 + 1$, $ \ldots $, $ l_1 -1 $ conditioned on $  q_{l_1}$ being equal to $ 0 $. \\
		\For{$ m=1 $ to $ k -1$}{
		Apply a controlled $ R_Y(\theta_m) $ gate, with  $ \theta_m = - 2 \arccos \left(\sqrt {\frac{2^{l_m}}{M-M_{m-1}}}\right)  $, on $ \ket{{q}_{l_{m+1}}} $  conditioned on $  q_{l_m}$ being $ 0 $. \\
		Apply a controlled Hadamard ($ H $) gate on $ \ket{q_{i}} $ for $ i = l_m$, $ l_m + 1$, $ \ldots $, $ l_{m+1} -1 $ conditioned on $ {{q}_{l_{m+1}}} $  being equal to $ 0 $.\\
		Set $ M_m = M_{m-1} + 2^{l_m} $.
	      }
	\Return{$ \ket{\Psi}$}
}
	\caption{A quantum algorithm for the preparation of uniform superposition state $\ket{\Psi} = \frac{1}{\sqrt{M}}\sum_{j = 0}^{M - 1} \ket{j}$.}
\end{algorithm}
\vskip 0.5cm
\subsection{Explanation of the algorithm} \label{sec:explanation}
Let $ M $ be a positive integer with $ 2 < M < 2^n $ and $ M \neq 2^r $ for any $ r \in \NN $, i.e., $ M $ is not an integer power of $ 2 $.
Let $ M = \sum_{j=0}^{k} \, 2^{l_j} $ with $ 0 \leq l_0 < l_1 < \ldots < l_{k-1} < l_k \leq n-1 $. It is clear that 
$  l_0$, $ l_1 $, $ \ldots $, $ l_k $,  is an ordered sequence of numbers that contain the locations of $ 1 $ in the reverse
binary representation of $ M $. Algorithm \ref{alg_uniform_superposition} begins by computation of  $  l_0, l_1, \ldots\,,l_k  $, followed by 
initialization of the quantum state $ \ket{\Psi} =   \ket{q_{n-1}  } \otimes \ket{q_{n -2 }} \otimes \cdots \cdots \otimes \ket{q_{1}} \otimes \ket{q_{0}}  = \ket{0}^{\otimes n} $.

Let for any integer $r$, with $ 0 \leq r \leq   k $, $M_r$ be defined as     
\begin{align} \label{eq:M_r}
  M_r =   \sum_{j=0}^{r} \, 2^{l_j}.
\end{align}
We note that $M_0$ is defined in line $3$ in Algorithm \ref{alg_uniform_superposition}. Further, line $13$ of Algorithm \ref{alg_uniform_superposition} iteratively defines $M_r$ for $r=1$
to $r = k-1$. 

The key steps in Algorithm \ref{alg_uniform_superposition} are lines $11$ and $12$, involving the applications of a controlled rotation and controlled Hadamard gates.
In line $11$, $ R_Y(\theta_m) $ gate, with  $ \theta_m = - 2 \arccos \left(\sqrt {\frac{2^{l_m}}{M-M_{m-1}}}\right)  $, acts on  on $ \ket{{q}_{l_{m+1}}} $  conditioned on $  q_{l_m}$ being $ 0 $. 
Further, the action of the rotation gate $R_Y(\theta_r)$, with  $ \theta_r = - 2 \arccos \left(\sqrt {\frac{2^{l_r}}{M- M_{r-1} }}\right)  $, on $ \ket{1}$  is the following,
\begin{align}
    R_Y(\theta_r)  \ket{1}   = a_r \ket{0} + b_r \ket{1},
\end{align}
where
\begin{align} \label{eq_a_r_b_r}
		b_r = \sqrt {\frac{2^{l_r}    }{M -    M_{r-1} }},  \quad \text{and} \quad     
	    a_r = \sqrt {\frac{M -    M_{r} } {M -  M_{r-1} }},	
\end{align}
with $  0 < r \leq k-1 $ and 
\begin{align} \label{eq_a_zero_b_zero}
b_0 = \sqrt {\frac{2^{l_0}}{M}},  \quad \text{and} \quad a_0 = \sqrt {\frac{M- 2^{l_0}}{M}}.	
\end{align}
Clearly, the coefficients $a_r$ and $b_r$ satisfy the normalization condition $\abs{a_r}^2  + \abs{b_r}^2 =1$.
Next, we describe the steps of Algorithm \ref{alg_uniform_superposition} in detail.\\
\noindent \textbf{Case 1: $M$ is odd.}
First, we consider the case when $ M $ is odd. It means $l_0 = 0$. 
On the application of the $ X $ gate (ref.~line $ 4 $, Algorithm \ref{alg_uniform_superposition}) on $ \ket{q_{i}} $ for $ i = l_1$, $ l_2$, $ \ldots $, $ l_k $, the following quantum state is obtained,
\begin{align}
       | \, 0  \, \cdots \,  \smallunderbrace{1}_{ l_k }  \, \cdots  \, 0  \, \cdots \, \smallunderbrace{1}_{ l_{k-1} }  \, \cdots \, 0  \, \cdots \,  \smallunderbrace{1}_{ l_2 }  \, \cdots \, 0  \, \cdots \,   \smallunderbrace{1}_{ l_1 }  \, \cdots \, 0  \, \cdots \,  \smallunderbrace{0}_{ l_0 =0 }  \rangle.
\end{align}
Here, the notation $  \smallunderbrace{1}_{ l_r }  \, \cdots  \, 0  \, \cdots \, \smallunderbrace{1}_{ l_{r-1} }  $ indicates that all the qubits between the positions $l_{r-1} $ and $l_r$ are in the quantum state $\ket{0}$, and the notation $| \, 0  \, \cdots \,  \smallunderbrace{1}_{ l_k }$ represents the fact that all the qubits on the left of $l_k$ are in the quantum state $\ket{0}$. This convention will be followed in the rest of the paper. 

Next, the action of the rotation $ R_Y(\theta_0) $ gate (ref.~line $ 8 $, Algorithm \ref{alg_uniform_superposition}) on $ \ket{q_{l_1}}$, with $  \theta_0 = - 2 \arccos \left(\sqrt {\frac{M_0}{M}}\right)  $, where $ M_0 = 2^{l_0} $, gives the quantum state
\begin{align}
	  & b_0 | \, 0  \, \cdots \,  \smallunderbrace{1}_{ l_k }  \, \cdots  \, 0  \, \cdots \, \smallunderbrace{1}_{ l_{k-1} }  \, \cdots \, 0  \, \cdots \,  \smallunderbrace{1}_{ l_2 }  \, \cdots \, 0  \, \cdots \,   \smallunderbrace{1}_{ l_1 }  \, \cdots \, 0  \, \cdots \,  \smallunderbrace{0}_{ l_0 =0 }  \rangle \nonumber \\
		& + 
		a_0   | \, 0  \, \cdots \,  \smallunderbrace{1}_{ l_k }  \, \cdots  \, 0  \, \cdots \, \smallunderbrace{1}_{ l_{k-1} }  \, \cdots \, 0  \, \cdots \,  \smallunderbrace{1}_{ l_2 }  \, \cdots \, 0  \, \cdots \,   \smallunderbrace{0}_{ l_1 }  \, \cdots \, 0  \, \cdots \,  \smallunderbrace{0}_{ l_0 =0 }  \rangle,
\end{align}
where 
$a_0$ and $b_0$ are as defined earlier in  \meqref{eq_a_zero_b_zero}.
Subsequently, the application of the controlled Hadamard gate (ref.~line $ 9 $, Algorithm \ref{alg_uniform_superposition}) on $ \ket{q_{i}} $ for $ i = l_0$, $ l_0 + 1$, $ \ldots $, $ l_1 -1 $ conditioned on $  q_{l_1}$ being equal to $ 0 $ gives the quantum state,
 \begin{align}
	\ket{\psi_0}  &= b_0 | \, 0  \, \cdots \,  \smallunderbrace{1}_{ l_k }  \, \cdots  \, 0  \, \cdots \, \smallunderbrace{1}_{ l_{k-1} }  \, \cdots \, 0  \, \cdots \,  \smallunderbrace{1}_{ l_2 }  \, \cdots \, 0  \, \cdots \,   \smallunderbrace{1}_{ l_1 }  \, \cdots \, 0  \, \cdots \,  \smallunderbrace{0}_{ l_0 =0 }  \nonumber \rangle \\
	& + 
	a_0   | \, 0  \, \cdots \,  \smallunderbrace{1}_{ l_k }  \, \cdots  \, 0  \, \cdots \, \smallunderbrace{1}_{ l_{k-1} }  \, \cdots \, 0  \, \cdots \,  \smallunderbrace{1}_{ l_2 }  \, \cdots \, 0  \, \cdots \,   \smallunderbrace{0}_{ l_1 }  \, \cdots \, +  \, \cdots \,  \smallunderbrace{+}_{ l_0 =0 }  \rangle,
\end{align}
where $ \ket{+} = \frac{1}{\sqrt{2}} \left( \ket{0} +  \ket{1} \right) $.
Next we consider the ``For Loop'' in lines $ 10 $-$ 13 $ in Algorithm \ref{alg_uniform_superposition}. We note that in the first iteration, i.e., for $ m=1 $, the  
application of a controlled rotation on $ \ket{{q}_{l_{2}}} $  conditioned on $  q_{l_1}$ being $ 0 $ (line $ 11 $, Algorithm \ref{alg_uniform_superposition}) results in the quantum state
\begin{align}
	  & b_0 | \, 0  \, \cdots \,  \smallunderbrace{1}_{ l_k }  \, \cdots  \, 0  \, \cdots \, \smallunderbrace{1}_{ l_{k-1} }  \, \cdots \, 0  \, \cdots \,  \smallunderbrace{1}_{ l_2 }  \, \cdots \, 0  \, \cdots \,   \smallunderbrace{1}_{ l_1 }  \, \cdots \, 0  \, \cdots \,  \smallunderbrace{0}_{ l_0 =0 }  \rangle  \nonumber\\
	& + 
	a_0 b_1  | \, 0  \, \cdots \,  \smallunderbrace{1}_{ l_k }  \, \cdots  \, 0  \, \cdots \, \smallunderbrace{1}_{ l_{k-1} }  \, \cdots \, 0  \, \cdots \,  \smallunderbrace{1}_{ l_2 }  \, \cdots \, 0  \, \cdots \,   \smallunderbrace{0}_{ l_1 }  \, \cdots \, +  \, \cdots \,  \smallunderbrace{+}_{ l_0 =0 }  \rangle  \nonumber\\
   & +	a_0 a_1  | \, 0  \, \cdots \,  \smallunderbrace{1}_{ l_k }  \, \cdots  \, 0  \, \cdots \, \smallunderbrace{1}_{ l_{k-1} }  \, \cdots \, 0  \, \cdots \,  \smallunderbrace{0}_{ l_2 }  \, \cdots \, 0  \, \cdots \,   \smallunderbrace{0}_{ l_1 }  \, \cdots \, +  \, \cdots \,  \smallunderbrace{+}_{ l_0 =0 }  \rangle.
\end{align}
Then application of a controlled Hadamard ($ H $) gate on $ \ket{q_{i}} $ for $ i = l_1$, $ l_1 + 1$, $ \ldots $, $ l_2 -1 $ conditioned on $ {{q}_{l_{2}}} $  being equal to $ 0 $ (ref.~line $ 12 $, Algorithm \ref{alg_uniform_superposition}), results in the quantum state 
\begin{align}
\ket{\psi_1} = 	& b_0 | \, 0  \, \cdots \,  \smallunderbrace{1}_{ l_k }  \, \cdots  \, 0  \, \cdots \, \smallunderbrace{1}_{ l_{k-1} }  \, \cdots \, 0  \, \cdots \,  \smallunderbrace{1}_{ l_2 }  \, \cdots \, 0  \, \cdots \,   \smallunderbrace{1}_{ l_1 }  \, \cdots \, 0  \, \cdots \,  \smallunderbrace{0}_{ l_0 =0 }  \rangle  \nonumber\\
	& + 
	a_0 b_1  | \, 0  \, \cdots \,  \smallunderbrace{1}_{ l_k }  \, \cdots  \, 0  \, \cdots \, \smallunderbrace{1}_{ l_{k-1} }  \, \cdots \, 0  \, \cdots \,  \smallunderbrace{1}_{ l_2 }  \, \cdots \, 0  \, \cdots \,   \smallunderbrace{0}_{ l_1 }  \, \cdots \, +  \, \cdots \,  \smallunderbrace{+}_{ l_0 =0 }  \rangle  \nonumber\\
	& +	a_0 a_1  | \, 0  \, \cdots \,  \smallunderbrace{1}_{ l_k }  \, \cdots  \, 0  \, \cdots \, \smallunderbrace{1}_{ l_{k-1} }  \, \cdots \, 0  \, \cdots \,  \smallunderbrace{0}_{ l_2 }  \, \cdots \, +  \, \cdots \,   \smallunderbrace{+}_{ l_1 }  \, \cdots \, +  \, \cdots \,  \smallunderbrace{+}_{ l_0 =0 }  \rangle.
\end{align}
Here, 
\begin{align*}
	b_1 = \sqrt {\frac{2^{l_1}}{M - 2^{l_0}}},  \quad \text{and} \quad     a_1 = \sqrt {\frac{M- 2^{l_0} - 2^{l_1} }{M-2^{l_0} }},	
\end{align*}
are as defined in \meqref{eq_a_r_b_r}.
It follows from an easy induction argument (see \mref{sec_correctness_algo_one}) that at the end of the iteration $ m = k-1 $, the following quantum state is obtained,
\begin{align} \label{eq_psi_k_final}
	\ket{\psi_{k-1}} = 	& b_0 | \, 0  \, \cdots \,  \smallunderbrace{1}_{ l_k }  \, \cdots  \, 0  \, \cdots \, \smallunderbrace{1}_{ l_{k-1} }  \, \cdots \, 0  \, \cdots \,  \smallunderbrace{1}_{ l_2 }  \, \cdots \, 0  \, \cdots \,   \smallunderbrace{1}_{ l_1 }  \, \cdots \, 0  \, \cdots \,  \smallunderbrace{0}_{ l_0 =0 }  \rangle  \nonumber\\
	& + 
	a_0 b_1  | \, 0  \, \cdots \,  \smallunderbrace{1}_{ l_k }  \, \cdots  \, 0  \, \cdots \, \smallunderbrace{1}_{ l_{k-1} }  \, \cdots \, 0  \, \cdots \, \smallunderbrace{1}_{ l_{3} }  \, \cdots \, 0  \, \cdots \,  \smallunderbrace{1}_{ l_2 }  \, \cdots \, 0  \, \cdots \,   \smallunderbrace{0}_{ l_1 }  \, \cdots \, +  \, \cdots \,  \smallunderbrace{+}_{ l_0 =0 }  \rangle \nonumber \\
	& +	a_0 a_1 b_2 | \, 0  \, \cdots \,  \smallunderbrace{1}_{ l_k }  \, \cdots  \, 0  \, \cdots \, \smallunderbrace{1}_{ l_{k-1} }  \, \cdots  \, 0  \, \cdots \, \smallunderbrace{1}_{ l_{3} }  \, \cdots \, 0  \, \cdots \,  \smallunderbrace{0}_{ l_2 }  \, \cdots \, +  \, \cdots \,   \smallunderbrace{+}_{ l_1 }  \, \cdots \, +  \, \cdots \,  \smallunderbrace{+}_{ l_0 =0 }  \rangle \nonumber \\
	&+	a_0 a_1 a_2 b_3 | \, 0  \, \cdots \,  \smallunderbrace{1}_{ l_k }  \, \cdots  \, 0  \, \cdots \, \smallunderbrace{1}_{ l_{k-1} }  \, \cdots  \, 0  \, \cdots \, \smallunderbrace{0}_{ l_{3} }  \, \cdots \, +  \, \cdots \,  \smallunderbrace{+}_{ l_2 }  \, \cdots \, +  \, \cdots \,   \smallunderbrace{+}_{ l_1 }  \, \cdots \, +  \, \cdots \,  \smallunderbrace{+}_{ l_0 =0 }  \rangle \nonumber \\
	& \cdots  \nonumber\\
	&+	a_0 a_1 \ldots a_{k-2} b_{k-1} | \, 0  \, \cdots \,  \smallunderbrace{1}_{ l_k }  \, \cdots  \, 0  \, \cdots \, \smallunderbrace{0}_{ l_{k-1} }  \, \cdots \, +  \, \cdots \, \smallunderbrace{+}_{ l_{3} }  \, \cdots \, +  \, \cdots \,  \smallunderbrace{+}_{ l_2 }  \, \cdots \, +  \, \cdots \,   \smallunderbrace{+}_{ l_1 }  \, \cdots \, +  \, \cdots \,  \smallunderbrace{+}_{ l_0 =0 }  \rangle \nonumber \\
	&+	a_0 a_1 \ldots a_{k-1} | \, 0  \, \cdots \,  \smallunderbrace{0}_{ l_k }  \, \cdots  \, + \, \cdots \, \smallunderbrace{+}_{ l_{k-1} }  \, \cdots \, +  \, \cdots \,  \smallunderbrace{+}_{ l_{3} }  \, \cdots \, +  \, \cdots \, \smallunderbrace{+}_{ l_2 }  \, \cdots \, +  \, \cdots \,   \smallunderbrace{+}_{ l_1 }  \, \cdots \, +  \, \cdots \,  \smallunderbrace{+}_{ l_0 =0 }  \rangle,
\end{align}
where $a_r$ and $b_r$ are defined in \meqref{eq_a_r_b_r} and \meqref{eq_a_zero_b_zero}.
We observe that the following has $ l_r $ qubits in the state $ \ket{+} $, therefore, 
\begin{align}\label{eq_plus_states}
		a_0 a_1 \ldots a_{r-1} b_{r} | \, 0  \, \cdots \,  \smallunderbrace{1}_{ l_k }  \, \cdots  \, 0  \, \cdots \, \smallunderbrace{0}_{ l_{r} }  \, \cdots \, +  \, \cdots \, \smallunderbrace{+}_{ l_{3} }  \, \cdots \, +  \, \cdots \,  \smallunderbrace{+}_{ l_2 }  \, \cdots \, +  \, \cdots \,   \smallunderbrace{+}_{ l_1 }  \, \cdots \, +  \, \cdots \,  \smallunderbrace{+}_{ l_0 =0 }  \rangle 
\end{align}
contains a superposition of $ 2^{l_r} $ quantum states with equal amplitude $$ \frac{a_0 a_1 \ldots a_{r-1} b_{r}}{\sqrt{2^{l_r}}} $$ for $  0 < r \leq k-1 $. 
It is easy to see that
\begin{align}\label{eq_all_equal}
	\frac{ b_0 }{\sqrt{2^{l_0}}} = \frac{a_0 b_1 }{\sqrt{2^{l_1}}} =   \frac{a_0 a_1 b_2}{\sqrt{2^{l_2}}} = \frac{a_0 a_1 a_2 b_3}{\sqrt{2^{l_3}}}  = \cdots \cdots =  \frac{a_0 a_1 \ldots a_{k-2} b_{k-1}}{\sqrt{2^{l_{k-1}}}}  =  \frac{a_0 a_1 \ldots a_{k-2} a_{k-1} }{\sqrt{2^{l_{k}}}} = \frac{1}{\sqrt{M}}.
\end{align}
From \meqref{eq_all_equal} and \meqref{eq_psi_k_final}, it follows that the output of Algorithm \ref{alg_uniform_superposition} is a uniform superposition of $ M $ distinct states, as desired. 

\subparagraph{Case 2: $M$ is even.}
If $M$ is an even number, then it is clear that $ l_0 \neq 0$. On the application of the $ X $ gate (line $ 4 $, Algorithm \ref{alg_uniform_superposition}) on $ \ket{q_{i}} $ for $ i = l_1$, $ l_2$, $ \ldots $, $ l_k $, the following state is obtained.
\begin{align}
    \mathclap{   | \, 0  \, \cdots \,  \smallunderbrace{1}_{ l_k }  \, \cdots  \, 0  \, \cdots \, \smallunderbrace{1}_{ l_{k-1} }  \, \cdots \, 0  \, \cdots \,  \smallunderbrace{1}_{ l_2 }  \, \cdots \, 0  \, \cdots \,   \smallunderbrace{1}_{ l_1 }  \, \cdots \, 0  \, \cdots \,  \smallunderbrace{0}_{ l_0  } \, \cdots \, 0  \, \cdots \,  \rangle.}
\end{align}
Since $ l_0 > 0$ in this case, the Hadamard gates are applied on $ \ket{q_{i}} $ for $ i = 0$, $ 1$, $ \ldots $, $ l_0-1 $, and the following state is obtained (ref.~lines $ 5$ and $6$, Algorithm \ref{alg_uniform_superposition}).
\begin{align}
    \mathclap{   | \, 0  \, \cdots \,  \smallunderbrace{1}_{ l_k }  \, \cdots  \, 0  \, \cdots \, \smallunderbrace{1}_{ l_{k-1} }  \, \cdots \, 0  \, \cdots \,  \smallunderbrace{1}_{ l_2 }  \, \cdots \, 0  \, \cdots \,   \smallunderbrace{1}_{ l_1 }  \, \cdots \, 0  \, \cdots \,  \smallunderbrace{0}_{ l_0  } \, \cdots \, +  \, \cdots \,  \rangle.}
\end{align}
When $ M $ is an even number, the remaining steps of Algorithm \ref{alg_uniform_superposition}  are similar to the previous case (i.e., when $M$ is odd) and can be easily verified. 

\subsection{Proof of the correctness of Algorithm \ref{alg_uniform_superposition}} \label{sec_correctness_algo_one}
Steps of Algorithm \ref{alg_uniform_superposition} were already described in \mref{sec:explanation}. It only remains to show that the ``For Loop'' in lines $ 10 $-$ 13 $ in Algorithm \ref{alg_uniform_superposition} works correctly. In the following, mathematical induction will be used to prove this.

It is easy to see that at the end of the iteration $ m = 1 $, the quantum state is 
\begin{align}
    \ket{\psi_1} = \frac{b_0 }{\sqrt{2^{l_0}}}  \sum_{j=0}^{2^{l_0} - 1} \ket{j + M - 2^{l_0}} &+  \frac{a_0 b_1 }{\sqrt{2^{l_1}}}  \sum_{j=0}^{2^{l_1} - 1} \ket{j + M - 2^{l_0} - 2^{l_1}} + \frac{a_0 a_1  }{\sqrt{2^{l_{2}}}}\sum_{j=0}^{2^{l_{2}} - 1} \ket{  j + M - \sum_{s=0}^{2} \, 2^{l_s}},
\end{align}
where $a_1$, $b_1$, $a_0$ and $b_0$ are as defined in \meqref{eq_a_r_b_r} and \meqref{eq_a_zero_b_zero}, respectively.

Let $ 2 < r \leq k-1$. Assume that at the end of the iteration, $ m = r -1 $ (or at the beginning of the 
iteration $m=r$), the quantum state obtained is
\begin{align}\label{eq_psi_algo_one_general}
\ket{\psi_{r-1}} = 
    \frac{b_0 }{\sqrt{2^{l_0}}}  \sum_{j=0}^{2^{l_0} - 1} \ket{j + M - 2^{l_0}} &+  \frac{a_0 b_1 }{\sqrt{2^{l_1}}}  \sum_{j=0}^{2^{l_1} - 1} \ket{j + M - 2^{l_0} - 2^{l_1}}
    + \frac{a_0 a_1 b_2 }{\sqrt{2^{l_2}}}   \sum_{j=0}^{2^{l_2} - 1} \ket{j + M - 2^{l_0} - 2^{l_1} - 2^{l_2} } \nonumber\\
    \cdots \cdots
    &+ \frac{a_0 a_1 \ldots a_{r-2} b_{r-1}}{\sqrt{2^{l_{r-1}}}} \sum_{j=0}^{2^{l_{r-1}} - 1} \ket{j + M - \sum_{s=0}^{r-1} \, 2^{l_s} }
    + \frac{a_0 a_1 \ldots a_{r-2} a_{r-1} }{\sqrt{2^{l_{r}}}}\sum_{j=0}^{2^{l_{r}} - 1} \ket{  j + M - \sum_{s=0}^{r} \, 2^{l_s}}
\end{align}
where $a_r$ and $b_r$ are defined in \meqref{eq_a_r_b_r} and \meqref{eq_a_zero_b_zero}.
It follows from \meqref{eq_a_r_b_r} and \meqref{eq_a_zero_b_zero} that $\ket{\psi_{r-1}}$ can alternatively be written as
\begin{align}
    \ket{\psi_{r-1}} = \frac{1}{\sqrt{M}} \left( \sum_{j= 0}^{M_{r-1}-1} \, \ket{j +M - M_{r-1}} \right) + 
    \sqrt {\frac{M -M_{r-1} } {M 2^{l_r}}}  \left(\sum_{j=0}^{2^{l_{r}} - 1} \ket{  j + M - M_r}\right),
\end{align}
where $M_r$ is defined in \eqref{eq:M_r}.

It follows from Lemma \ref{lemma_induction_algo_one} that the actions of lines $11$ and $12$ of Algorithm \ref{alg_uniform_superposition} produce the state $\ket{\psi_r}$ as given in 
\meqref{eq_psi_r}. This completes the proof of the induction step. On taking $r = k-1 $ in \meqref{eq_psi_r} the state
$ \ket{\psi_{k-1}} = \frac{1}{\sqrt{M}}  \,  \sum_{j=0}^{M-1} \, \ket{j}$ is obtained, that is in a uniform superposition of $ M $ distinct quantum states, proving the correctness of Algorithm \ref{alg_uniform_superposition}.  

\begin{lem}\label{lemma_induction_algo_one}
Let $M$ be as defined in Algorithm \ref{alg_uniform_superposition}, i.e., $ M = \sum_{j=0}^{k} \, 2^{l_j} $ with $ 0 \leq l_0 < l_1 < \ldots < l_{k-1} < l_k \leq n-1 $.
    Let \begin{align}
    \ket{\psi_{r-1}} = \frac{1}{\sqrt{M}} \left( \sum_{j= 0}^{M_{r-1}-1} \, \ket{j +M - M_{r-1}} \right) + 
    \sqrt {\frac{M -M_{r-1} } {M 2^{l_r}}}  \left(\sum_{j=0}^{2^{l_{r}} - 1} \ket{  j + M - M_r}\right),
\end{align}
with $ 2 \leq r \leq k-1$ and where $M_r$ is defined in \meqref{eq:M_r}. Then the actions of lines $11$ and $12$ of Algorithm \ref{alg_uniform_superposition} (i.e., the action of a controlled $ R_Y(\theta_r) $ gate, with  $ \theta_r = - 2 \arccos \left(\sqrt {\frac{2^{l_r}}{M- M_{r-1} }}\right)  $, on $ \ket{{q}_{l_{r+1}}} $  conditioned on $  q_{l_r}$ being $ 0 $, followed by the action of the controlled Hadamard ($ H $) gate on $ \ket{q_{i}} $ for $ i = l_r$, $ l_r + 1$, $ \ldots $, $ l_{r+1} -1 $ conditioned on $ {{q}_{l_{r+1}}} $  being equal to $ 0 $), results in the following quantum state,
\begin{align}\label{eq_psi_r}
    \ket{\psi_{r}} = \frac{1}{\sqrt{M}} \left( \sum_{j= 0}^{M_{r}-1} \, \ket{j +M - M_{r}} \right) + 
    \sqrt {\frac{M -M_r } {M 2^{l_{r+1}}}}  \left(\sum_{j=0}^{2^{l_{r+1}} - 1} \ket{  j + M - M_{r+1}}\right).
\end{align}
\end{lem}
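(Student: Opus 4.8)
The plan is to push the two groups of computational basis states appearing in $\ket{\psi_{r-1}}$ through the gates of lines $11$ and $12$ separately, keeping careful track of the bits at positions $l_r$ and $l_{r+1}$. The only arithmetic facts needed are the bit-location inequalities $M_{r-1} < 2^{l_r}$ and $M_r < 2^{l_{r+1}}$ (which hold because $l_{s-1} < l_s$ forces $\sum_{t<s} 2^{l_t} \le 2^{l_s}-1$), together with the identities $M_r = M_{r-1} + 2^{l_r}$ and $M_{r+1} = M_r + 2^{l_{r+1}}$.

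First I would identify which basis states carry $q_{l_r} = 0$, since the controlled $R_Y(\theta_r)$ of line $11$ acts only on those. The first group $\ket{j + M - M_{r-1}}$ with $0 \le j \le M_{r-1}-1$ occupies the contiguous indices $M - M_{r-1}, \ldots, M-1$; because $M - M_{r-1} = \sum_{s \ge r} 2^{l_s}$ has bit $l_r$ equal to $1$ and $j < M_{r-1} < 2^{l_r}$ leaves that bit untouched, every such state has $q_{l_r}=1$ and is unaffected. The second group $\ket{j + M - M_r}$ instead has bit $l_r$ equal to $0$ and bit $l_{r+1}$ equal to $1$, so the controlled rotation applies $R_Y(\theta_r)\ket{1} = a_r\ket{0} + b_r\ket{1}$ at position $l_{r+1}$. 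This splits the second group into a $b_r$-branch (bit $l_{r+1}$ stays $1$, index unchanged) and an $a_r$-branch (bit $l_{r+1}$ flips to $0$, so the index drops by $2^{l_{r+1}}$, shifting from $M-M_r$ to $M-M_{r+1}$).

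The amplitude bookkeeping then uses the definitions in $\meqref{eq_a_r_b_r}$ and $\meqref{eq_a_zero_b_zero}$: the $b_r$-branch prefactor collapses to $1/\sqrt{M}$, and its index block $M-M_r,\ldots,M-M_{r-1}-1$ abuts the untouched first group $M-M_{r-1},\ldots,M-1$, fusing into the single uniform block $\tfrac{1}{\sqrt{M}}\sum_{j=0}^{M_r-1}\ket{j+M-M_r}$; the $a_r$-branch prefactor simplifies to $\sqrt{(M-M_r)/(M\,2^{l_r})}$ on the $2^{l_r}$ states $\ket{j+M-M_{r+1}}$. Finally I would apply the controlled Hadamard of line $12$, conditioned on $q_{l_{r+1}}=0$: this selects exactly the $a_r$-branch, since the merged block has $q_{l_{r+1}}=1$. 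On those states bits $l_r,\ldots,l_{r+1}-1$ are all $0$, so the $l_{r+1}-l_r$ Hadamards turn them into $\ket{+}$'s, expanding the sum over $j\in[0,2^{l_r})$ into a sum over $j\in[0,2^{l_{r+1}})$ while scaling the amplitude by $2^{-(l_{r+1}-l_r)/2}$; the product is precisely $\sqrt{(M-M_r)/(M\,2^{l_{r+1}})}$, which reproduces $\meqref{eq_psi_r}$.

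The delicate part is entirely the bit-position accounting: confirming that the control conditions on $q_{l_r}$ and $q_{l_{r+1}}$ really select the intended group, that the index ranges of the $b_r$-branch and the first group are contiguous so they merge into one uniform block, and that the Hadamard targets are genuinely all in $\ket{0}$ on the $a_r$-branch. Each reduces to the inequalities $M_{r-1}<2^{l_r}$ and $M_r<2^{l_{r+1}}$ noted above, after which the amplitude simplifications are the routine products already recorded in $\meqref{eq_a_r_b_r}$.
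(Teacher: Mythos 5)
Your proposal is correct and follows the same route as the paper's proof: push the two blocks of $\ket{\psi_{r-1}}$ through the controlled $R_Y(\theta_r)$ and controlled Hadamard gates, observe that the $b_r$-branch amplitude collapses to $1/\sqrt{M}$ and its index range fuses contiguously with the untouched first block, and that the Hadamards on the all-zero qubits $l_r,\ldots,l_{r+1}-1$ of the $a_r$-branch expand the sum to $2^{l_{r+1}}$ terms with the stated amplitude. The only difference is that you make explicit the bit-position checks (via $M_{r-1}<2^{l_r}$ and $M_r<2^{l_{r+1}}$) that the paper leaves implicit when asserting the intermediate states.
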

\begin{proof}
The  application of a controlled $ R_Y(\theta_r) $ gate, with  $ \theta_r = - 2 \arccos \left(\sqrt {\frac{2^{l_r}}{M- M_{r-1} }}\right)  $, on $ \ket{{q}_{l_{r+1}}} $  conditioned on $  q_{l_r}$ being $ 0 $ results in the following quantum state,
\begin{align}
     \frac{1}{\sqrt{M}} \left( \sum_{j= 0}^{M_{r-1}-1} \, \ket{j +M - M_{r-1}} \right) + 
    \sqrt {\frac{M - M_{r-1} } {M 2^{l_r}}}  \left( b_r \sum_{j=0}^{2^{l_{r}} - 1} \ket{  j + M - M_r}
    +  a_r \sum_{j=0}^{2^{l_{r}} - 1} \ket{  j + M - M_{r+1}}\right),
\end{align}
where $a_r$ and $b_r$ are as defined in \meqref{eq_a_r_b_r} and \meqref{eq_a_zero_b_zero}, respectively.
Then application of a controlled Hadamard ($ H $) gate on $ \ket{q_{i}} $ for $ i = l_r$, $ l_r + 1$, $ \ldots $, $ l_{r+1} -1 $ conditioned on $ {{q}_{l_{r+1}}} $  being equal to $ 0 $, yields the following quantum state,
\begin{align}
    \ket{\psi_{r}} & = \frac{1}{\sqrt{M}} \left( \sum_{j= 0}^{M_{r-1}-1} \, \ket{j +M - M_{r-1}} \right) + 
    \sqrt {\frac{M - M_{r-1}} {M 2^{l_r}}}  \left( b_r \sum_{j=0}^{2^{l_{r}} - 1} \ket{  j + M - M_r}
    +  \frac{a_r}{\sqrt{2^{l_{r+1} - l_r }}} \sum_{j=0}^{2^{l_{r+1}} - 1} \ket{  j + M - M_{r+1}}\right) \nonumber \\
    & = \frac{1}{\sqrt{M}} \left( \sum_{j= 0}^{M_{r}-1} \, \ket{j +M - M_{r}} \right) + 
    \sqrt {\frac{M - M_r } {M 2^{l_{r+1}}}}  \left(\sum_{j=0}^{2^{l_{r+1}} - 1} \ket{  j + M - M_{r+1}}\right).
\end{align}
This completes the proof.
\end{proof}

\subsection{Example circuits}
\label{sec:examples_usp}
\begin{example} \label{ex:usp_examples_ex_one}
We illustrate how Algorithm \ref{alg_uniform_superposition} works by considering the case of $ M =13$. Here $ M = 2^0 + 2^2 + 2^3$. Therefore, $l_0 = 0$, $l_1 = 2$ and $l_2 = 3$. We will use only $n = \ceil{\log_2~ M} = 4 $ qubits to create the uniform superposition state  $ \frac{1}{\sqrt{13}} \sum_{j=0}^{12} \, \ket{j}$.
In this case, the quantum circuit produced by Algorithm \ref{alg_uniform_superposition} is shown in \mfig{fig:uniform_ex_one}. In the following, we describe the steps of Algorithm \ref{alg_uniform_superposition} in detail for this case.

 \begin{figure}[H]
  \begin{center}
       \includegraphics[width=0.47\textwidth]{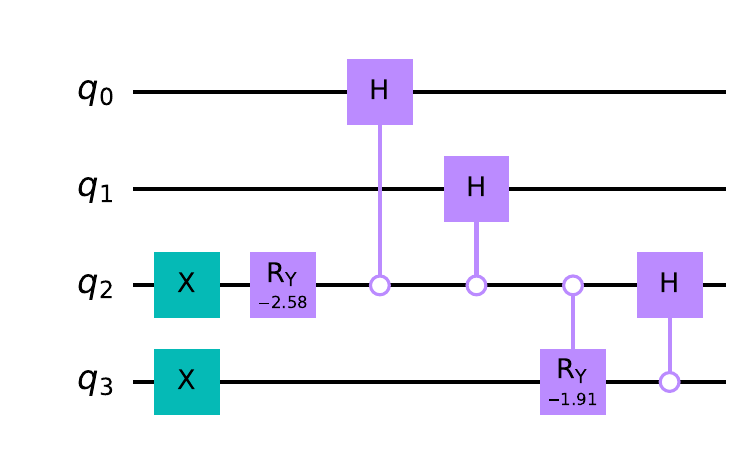}
        \includegraphics[width=0.52\textwidth]{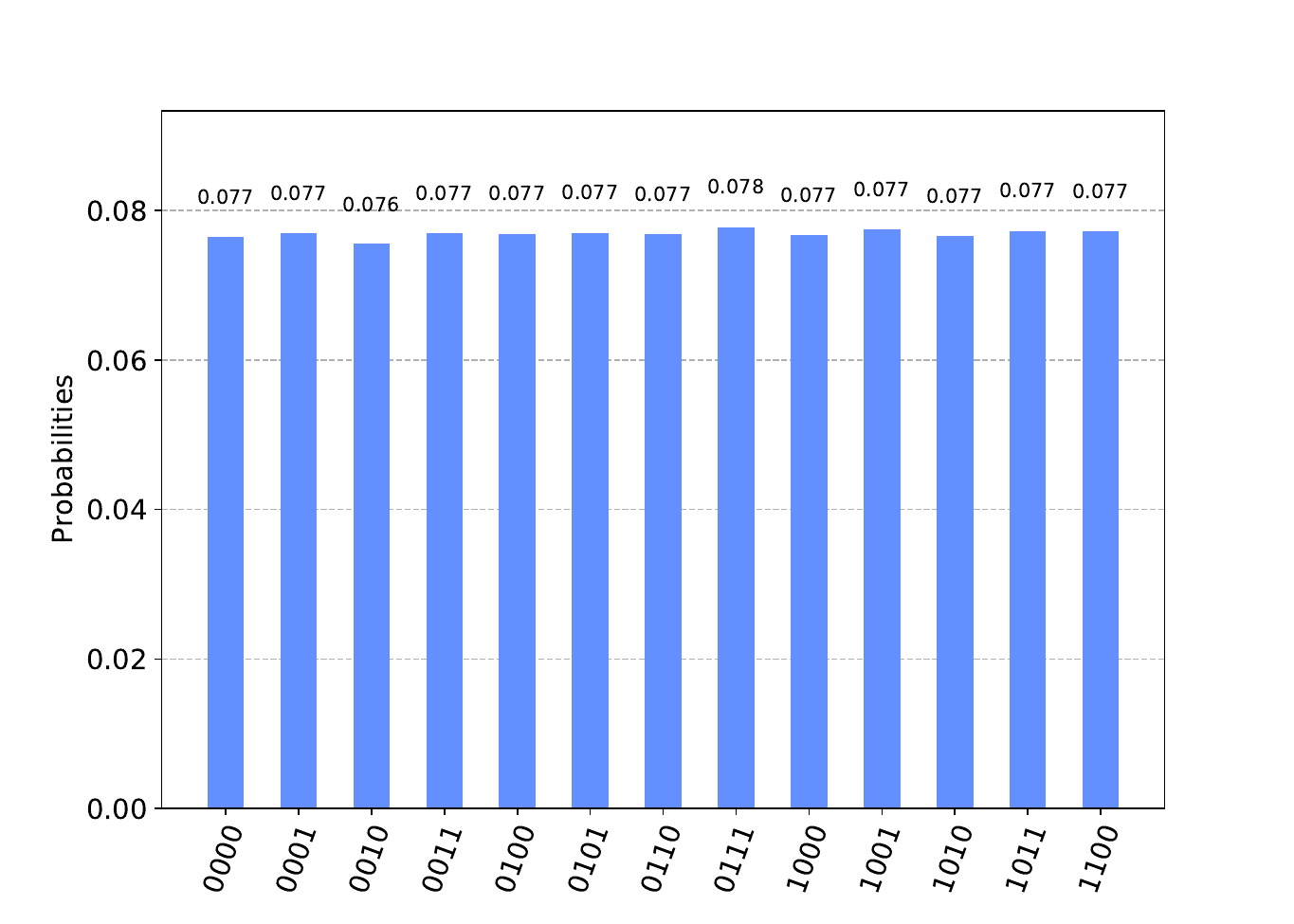}
  \end{center}
	 	\caption{A quantum circuit for creating the uniform quantum superposition state given in \meqref{eq_usp_state_vector_ex_one} is shown on the left. This corresponds to the case $ M=13 $ in Example \ref{ex:usp_examples_ex_one}. A histogram representing the sampling probabilities of obtaining various computational basis states is shown on the right.}
	 	\label{fig:uniform_ex_one}
	 \end{figure}

To begin with, each qubit is initialized to $\ket{0}$ (ref.~line $ 3 $, Algorithm \ref{alg_uniform_superposition}). Then on the application of the $ X $ gate (ref.~line $ 4 $, Algorithm \ref{alg_uniform_superposition}) on $ \ket{q_{i}} $ for $ i = l_1 =2$ and $ i = l_2 = 3$ the following quantum state is obtained,
\begin{align}
     |  \smallunderbrace{1}_{ l_2 =3 }  \,  \smallunderbrace{1}_{ l_1=2 } \, 0 \,  \smallunderbrace{0}_{ l_0 =0 }  \rangle.
\end{align}
Then the application of the rotation $ R_Y(\theta_0) $ gate on $ \ket{q_{l_1}}$, with $  \theta_0 = - 2 \arccos \left(\sqrt {\frac{M_0}{M}}\right)  $, where $ M_0 = 2^{l_0} =1 $, gives the quantum state
\begin{align}
    b_0 |  \smallunderbrace{1}_{ l_2 =3 }  \,  \smallunderbrace{1}_{ l_1=2 } \, 0 \,  \smallunderbrace{0}_{ l_0 =0 }  \rangle + a_0 |  \smallunderbrace{1}_{ l_2 =3 }  \,  \smallunderbrace{0}_{ l_1=2 } \, 0 \,  \smallunderbrace{0}_{ l_0 =0 }  \rangle,
\end{align}
where 
\begin{align}
b_0 = \sqrt {\frac{2^{l_0}}{M}} = \sqrt {\frac{1}{13}},  \quad \text{and} \quad a_0 = \sqrt {\frac{M- 2^{l_0}}{M}} = \sqrt {\frac{12}{13}},	
\end{align}
(ref.~line $ 8 $, Algorithm \ref{alg_uniform_superposition}).
Subsequently, the application of the controlled Hadamard gate on $ \ket{q_{i}} $ for $ i = l_0 =0$,  to  $i= l_1 -1 =1$ conditioned on $  q_{l_1} = q_2$ being equal to $ 0 $ gives the quantum state,
\begin{align}
    b_0 |  \smallunderbrace{1}_{ l_2 =3 }  \,  \smallunderbrace{1}_{ l_1=2 } \, 0 \,  \smallunderbrace{0}_{ l_0 =0 }  \rangle + a_0 |  \smallunderbrace{1}_{ l_2 =3 }  \,  \smallunderbrace{0}_{ l_1=2 } \, + \,  \smallunderbrace{+}_{ l_0 =0 }  \rangle,
\end{align}
where $ \ket{+} = \frac{1}{\sqrt{2}} \left( \ket{0} +  \ket{1} \right) $ (ref.~line $ 9 $, Algorithm \ref{alg_uniform_superposition}).
Next we consider the ``For Loop'' in lines $ 10 $-$ 13 $ in Algorithm \ref{alg_uniform_superposition}. We note that in the first iteration, i.e., for $ m=1 $, the  
application of a controlled rotation on $ \ket{{q}_{l_{2}}} = \ket{q_3} $  conditioned on $  q_{l_1} = q_2$ being $ 0 $ (ref.~line $ 11 $, Algorithm \ref{alg_uniform_superposition}) results in the quantum state
\begin{align}
    b_0 |  \smallunderbrace{1}_{ l_2 =3 }  \,  \smallunderbrace{1}_{ l_1=2 } \, 0 \,  \smallunderbrace{0}_{ l_0 =0 }  \rangle + a_0 b_1 |  \smallunderbrace{1}_{ l_2 =3 }  \,  \smallunderbrace{0}_{ l_1=2 } \, + \,  \smallunderbrace{+}_{ l_0 =0 }  \rangle
     + a_0 a_1 |  \smallunderbrace{0}_{ l_2 =3 }  \,  \smallunderbrace{0}_{ l_1=2 } \, + \,  \smallunderbrace{+}_{ l_0 =0 }  \rangle.
\end{align}
Then the application of a controlled Hadamard ($ H $) gate on $ \ket{q_{i}} $ for $ i = l_1 =2$ to  $ i = l_2 -1 =2  $ conditioned on $ \ket{{q}_{l_{2}}} = \ket{{q}_{3}} $  being equal to $ 0 $, results in the quantum state 
\begin{align}
  \ket{\Psi} =  b_0 |  \smallunderbrace{1}_{ l_2 =3 }  \,  \smallunderbrace{1}_{ l_1=2 } \, 0 \,  \smallunderbrace{0}_{ l_0 =0 }  \rangle + a_0 b_1 |  \smallunderbrace{1}_{ l_2 =3 }  \,  \smallunderbrace{0}_{ l_1=2 } \, + \,  \smallunderbrace{+}_{ l_0 =0 }  \rangle
     + a_0 a_1 |  \smallunderbrace{0}_{ l_2 =3 }  \,  \smallunderbrace{+}_{ l_1=2 } \, + \,  \smallunderbrace{+}_{ l_0 =0 }  \rangle.
\end{align}
Here, 
\begin{align}
	b_1 = \sqrt {\frac{2^{l_1}}{M - 2^{l_0}}} = \sqrt {\frac{4}{12}},  \quad \text{and} \quad     a_1 = \sqrt {\frac{M- 2^{l_0} - 2^{l_1} }{M-2^{l_0} }} = \sqrt {\frac{8 }{12}}.  	
\end{align}
It follows that
			\begin{align} \label{eq_usp_state_vector_ex_one}
					\ket{\Psi} & = \sqrt {\frac{1}{13}} \, |  \smallunderbrace{1}_{ l_2 =3 }  \,  \smallunderbrace{1}_{ l_1=2 } \, 0 \,  \smallunderbrace{0}_{ l_0 =0 }  \rangle + \sqrt {\frac{4}{13}}\, |  \smallunderbrace{1}_{ l_2 =3 }  \,  \smallunderbrace{0}_{ l_1=2 } \, + \,  \smallunderbrace{+}_{ l_0 =0 }  \rangle
					+ \sqrt {\frac{8}{13}} \, |  \smallunderbrace{0}_{ l_2 =3 }  \,  \smallunderbrace{+}_{ l_1=2 } \, + \,  \smallunderbrace{+}_{ l_0 =0 }  \rangle \nonumber \\ & = 
					\frac{1}{\sqrt{13}} \sum_{j=0}^{12} \, \ket{j}.
				\end{align}
    This result was verified using IBM's Qiskit simulation environment. A histogram of sampling probabilities of obtaining various computational basis states is presented on the right side of \mfig{fig:uniform_ex_one}.
\end{example}
	
		\begin{example} \label{ex:usp_examples_ex_two}
		We illustrate how Algorithm \ref{alg_uniform_superposition} works by considering the case of $ M =13 \times 8 = 104$. Here $ M = 2^3 + 2^5 + 2^6$. Therefore, $l_0 = 3$, $l_1 = 5$ and $l_2 = 6$. 
	 	We will use only $n = \ceil{\log_2~ M} = 7 $ qubits to create the uniform superposition state  $\frac{1}{\sqrt{104}} \sum_{j=0}^{103} \, \ket{j}$.
		In this case, the quantum circuit produced by Algorithm \ref{alg_uniform_superposition} is shown in \mfig{fig:uniform_ex_two}. 
		The steps of Algorithm \ref{alg_uniform_superposition} are described in the following for this case.
  
\begin{figure}[H]
  \begin{center}
       \includegraphics[width=0.47\textwidth]{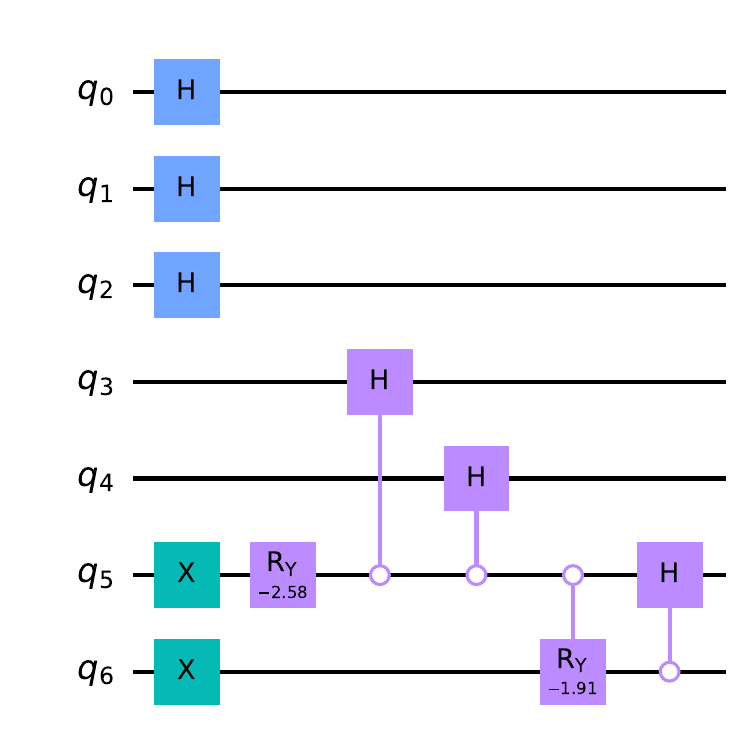}
  \end{center}
	 	\caption{A quantum circuit for creating the uniform quantum superposition state given in \meqref{eq_usp_state_vector_ex_two} is shown above. This corresponds to the case $ M=104 $ in Example \ref{ex:usp_examples_ex_two}. 
   }
	 	\label{fig:uniform_ex_two}
	 \end{figure}
  
   Each qubit is initialized to $\ket{0}$ (ref.~line $ 3 $, Algorithm \ref{alg_uniform_superposition}). Next, on the application of the $ X $ gate (ref.~line $ 4 $, Algorithm \ref{alg_uniform_superposition}) on $ \ket{q_{i}} $ for $ i = l_1 =5$ and $ i = l_2 = 6$ the following quantum state is obtained,
			\begin{align}
				|  \smallunderbrace{1}_{ l_2 =6 }  \,  \smallunderbrace{1}_{ l_1=5 } \, 0 \,  \smallunderbrace{0}_{ l_0 =3 } \, 0 \, 0 \, 0 \, \rangle.
			\end{align}
		  Since $ l_0  = 3 > 0 $, the application of the Hadamard gate on  $ \ket{q_{i}} $ for $ i = 0$, $ 1$,  and $ 2 $  results in the following quantum state 	(ref.~line $ 7 $, Algorithm \ref{alg_uniform_superposition}),
		  \begin{align}
		  	|  \smallunderbrace{1}_{ l_2 =6 }  \,  \smallunderbrace{1}_{ l_1=5 } \, 0 \,  \smallunderbrace{0}_{ l_0 =3 } \, + \, + \, + \, \rangle.
		  \end{align}
			Then the application of the rotation $ R_Y(\theta_0) $ gate on $ \ket{q_{l_1}} = \ket{q_{5}}$, with $  \theta_0 = - 2 \arccos \left(\sqrt {\frac{M_0}{M}}\right)  $, where $ M_0 = 2^{l_0} = 8 $, gives the quantum state
			\begin{align}
				b_0 \, |  \smallunderbrace{1}_{ l_2 =6 }  \,  \smallunderbrace{1}_{ l_1=5 } \, 0 \,  \smallunderbrace{0}_{ l_0 =3 } \, + \, + \, + \,  \rangle + a_0 \, |  \smallunderbrace{1}_{ l_2 =6 }  \,  \smallunderbrace{0}_{ l_1=5 } \, 0 \,  \smallunderbrace{0}_{ l_0 =3 } \, + \, + \, + \,  \rangle,
			\end{align}
			where 
			\begin{align}
				b_0 = \sqrt {\frac{2^{l_0}}{M}} = \sqrt {\frac{8}{104}},  \quad \text{and} \quad a_0 = \sqrt {\frac{M- 2^{l_0}}{M}} = \sqrt {\frac{96}{104}},	
			\end{align}
			(ref.~line $ 8 $, Algorithm \ref{alg_uniform_superposition}).
			Subsequently, the application of the controlled Hadamard gate on $ \ket{q_{i}} $ for $ i = l_0 =3$  to  $i= l_1 -1 =4$ conditioned on $  q_{l_1} = q_5$ being equal to $ 0 $ gives the quantum state,
			\begin{align}
				b_0 \, |  \smallunderbrace{1}_{ l_2 =6 }  \,  \smallunderbrace{1}_{ l_1=5 } \, 0 \,  \smallunderbrace{0}_{ l_0 =3 } \, + \, + \, + \, \rangle +  a_0 \, |  \smallunderbrace{1}_{ l_2 =6 }  \,  \smallunderbrace{0}_{ l_1=5 } \, + \,  \smallunderbrace{+}_{ l_0 =3 } \, + \, + \, + \, \rangle,
			\end{align}
			where $ \ket{+} = \frac{1}{\sqrt{2}} \left( \ket{0} +  \ket{1} \right) $ (ref.~line $ 9 $, Algorithm \ref{alg_uniform_superposition}).
			Next we consider the ``For Loop'' in lines $ 10 $-$ 13 $ in Algorithm \ref{alg_uniform_superposition}. We note that in the first iteration, i.e., for $ m=1 $, the  
			application of a controlled rotation on $ \ket{{q}_{l_{2}}} = \ket{q_6} $  conditioned on $  q_{l_1} = q_5$ being $ 0 $ (ref.~line $ 11 $, Algorithm \ref{alg_uniform_superposition}) results in the quantum state
			\begin{align}
				b_0 \,  |  \smallunderbrace{1}_{ l_2 =6 }  \,  \smallunderbrace{1}_{ l_1=5 } \, 0 \,  \smallunderbrace{0}_{ l_0 =3 } \, + \, + \, + \,  \rangle + a_0 b_1 |  \smallunderbrace{1}_{ l_2 =6 }  \,  \smallunderbrace{0}_{ l_1=5 } \, + \,  \smallunderbrace{+}_{ l_0 =3 } \, + \, + \, + \,  \rangle
				+ a_0 a_1 \, |  \smallunderbrace{0}_{ l_2 =6 }  \,  \smallunderbrace{0}_{ l_1=5 } \, + \,  \smallunderbrace{+}_{ l_0 =3 }  \, + \, + \, + \,  \rangle.
			\end{align}
			Then the application of a controlled Hadamard ($ H $) gate on $ \ket{q_{i}} $ for $ i = l_1 =5$ to  $ i = l_2 -1 =5  $ conditioned on $ \ket{{q}_{l_{2}}} = \ket{{q}_{6}} $  being equal to $ 0 $, results in the quantum state 
			\begin{align}
				\ket{\Psi} =  b_0 \, |  \smallunderbrace{1}_{ l_2 =6 }  \,  \smallunderbrace{1}_{ l_1=5 } \, 0 \,  \smallunderbrace{0}_{ l_0 =3 } \, + \, + \, + \,  \rangle + a_0 b_1 |  \smallunderbrace{1}_{ l_2 =6 }  \,  \smallunderbrace{0}_{ l_1=5 } \, + \,  \smallunderbrace{+}_{ l_0 =3 } \, + \, + \, + \,  \rangle
				+ a_0 a_1 \,  |  \smallunderbrace{0}_{ l_2 =6 }  \,  \smallunderbrace{+}_{ l_1=5 } \, + \,  \smallunderbrace{+}_{ l_0 =3 } \, + \, + \, + \, \rangle.
			\end{align}
			Here, 
			\begin{align}
				b_1 = \sqrt {\frac{2^{l_1}}{M - 2^{l_0}}} = \sqrt {\frac{32}{96}},  \quad \text{and} \quad     a_1 = \sqrt {\frac{M- 2^{l_0} - 2^{l_1} }{M-2^{l_0} }} = \sqrt {\frac{64 }{96}}.  	
			\end{align}
			It follows that
			\begin{align} \label{eq_usp_state_vector_ex_two}
				\ket{\Psi} & = \sqrt {\frac{8}{104}} \, |  \smallunderbrace{1}_{ l_2 =6 }  \,  \smallunderbrace{1}_{ l_1=5 } \, 0 \,  \smallunderbrace{0}_{ l_0 =3 } \, + \, + \, + \, \rangle + \sqrt {\frac{32}{104}}\, |  \smallunderbrace{1}_{ l_2 =6 }  \,  \smallunderbrace{0}_{ l_1=5 } \, + \,  \smallunderbrace{+}_{ l_0 =3 } \, + \, + \, + \, \rangle
				+ \sqrt {\frac{64}{104}} \, |  \smallunderbrace{0}_{ l_2 =6 }  \,  \smallunderbrace{+}_{ l_1=5 } \, + \,  \smallunderbrace{+}_{ l_0 =3 } \, + \, + \, + \, \rangle \nonumber \\ & = 
				 \frac{1}{\sqrt{104}} \sum_{j=0}^{103} \, \ket{j}.
			\end{align}
       The above result was verified using IBM's Qiskit simulation environment.  
		\end{example}

\begin{figure}[H]
\begin{center}
\includegraphics[width=0.49\textwidth]{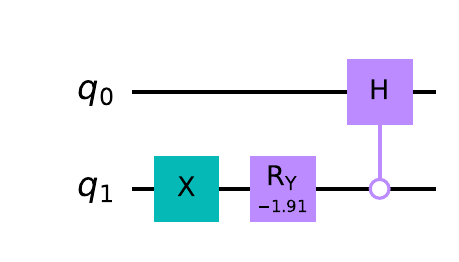}
\includegraphics[width=0.49\textwidth]{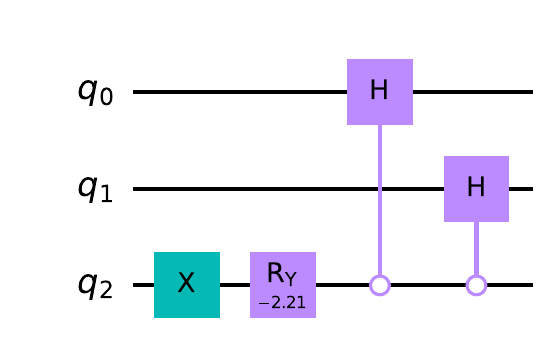} \\
\includegraphics[width=0.49\textwidth]{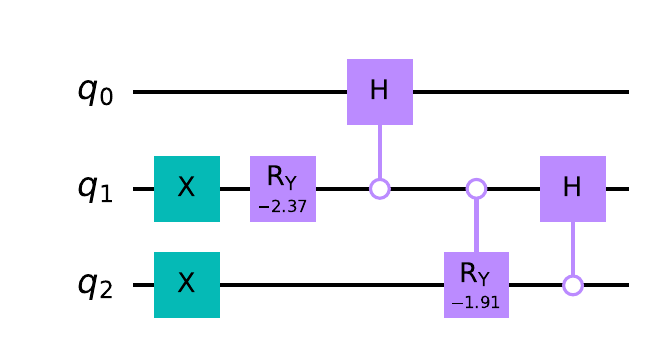}
\includegraphics[width=0.49\textwidth]{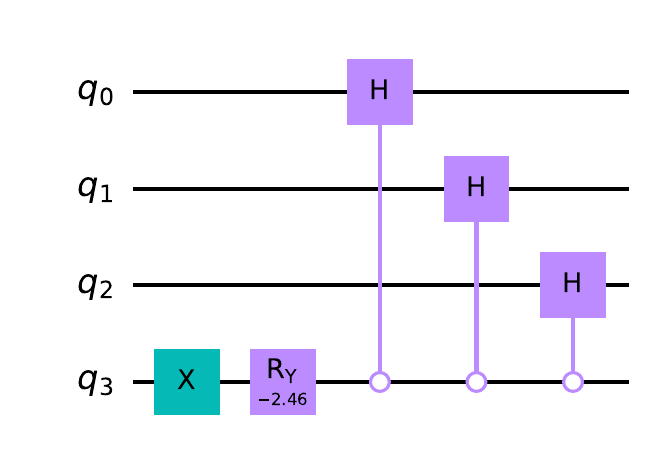}
\end{center}
\caption{Quantum circuits to obtain the uniform superposition states $\ket{\Psi} = \frac{1}{\sqrt{M}}\sum_{j=0}^{M-1} \ket{j}$, for (odd numbers) $M = 3$ (top, left), $5$ (top, right), $7$ (bottom, left) and $9$ (bottom, right) using Algorithm \ref{alg_uniform_superposition}.}
\label{fig:usp_odd}
\end{figure}

Quantum circuits for preparation of the uniform superposition states $\ket{\Psi} = \frac{1}{\sqrt{M}}\sum_{j=0}^{M-1} \ket{j}$ (based on Algorithm~\ref{alg_uniform_superposition}) are shown for selected cases in \mfig{fig:usp_odd} and \mfig{fig:usp_even}, where the number $M$ of distinct basis states in superposition is odd and even, respectively.
Algorithm \ref{alg_uniform_superposition} offers a highly efficient approach for creating the uniform superposition state $\ket{\Psi} = \frac{1}{\sqrt{M}}\sum_{j = 0}^{M - 1} \ket{j}$  by using only $\ceil{\log_2~ M}$ qubits.

We observe that there are additional Hadamard gates (at the top of the quantum circuit) for the even number cases in \mfig{fig:usp_even}. For instance, $M=6$ case in \mfig{fig:usp_even} contains an extra Hadamard gate in comparison to $M=3$ (in \mfig{fig:usp_odd}) case. For each factor of $2$ contained in $M$ there is a Hadamard gate in the circuit. For instance, the circuit for $M=12$ (in \mfig{fig:usp_even}) is similar to $M=3$ (in \mfig{fig:usp_odd}) except for 2 Hadamard gates at the top.
These observations can be related to line $ 7 $, Algorithm \ref{alg_uniform_superposition}, as $ l_0  > 0 $ when $M$ is even.

\begin{figure}[H]
\begin{center}
\includegraphics[width=0.4\textwidth]{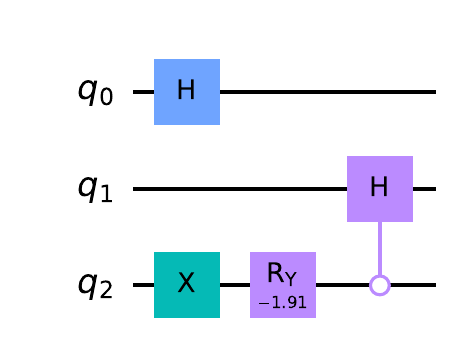}  \hspace{0.05\textwidth}
\includegraphics[width=0.4\textwidth]{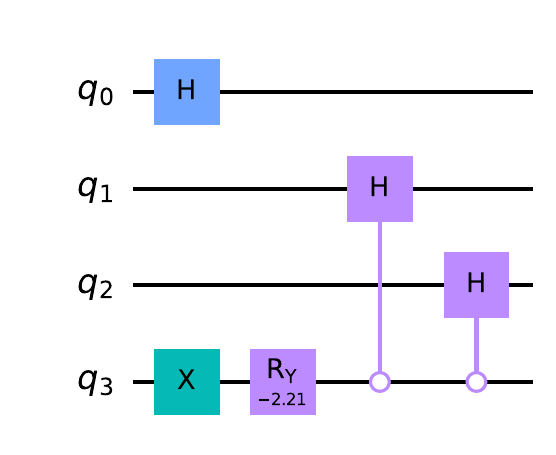}\\
\includegraphics[width=0.4\textwidth]{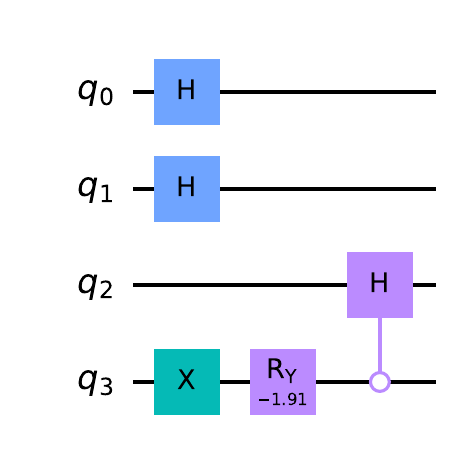} \hspace{0.05\textwidth}
\includegraphics[width=0.4\textwidth]{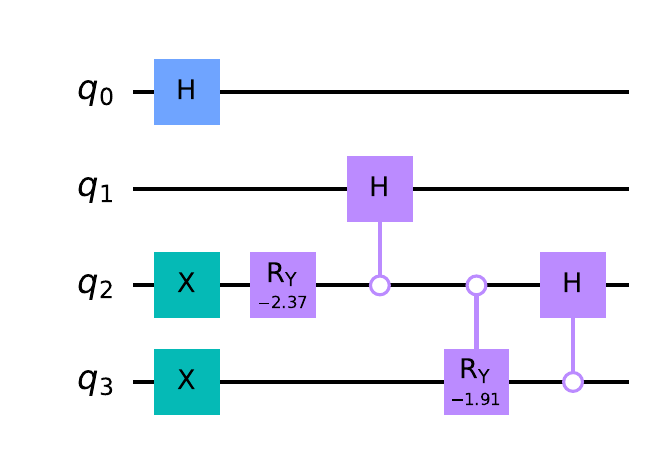}
\end{center}
\caption{Quantum circuits to obtain the uniform superposition states $\ket{\Psi} = \frac{1}{\sqrt{M}}\sum_{j=0}^{M-1} \ket{j}$, for (even numbers) $M = 6$ (top, left), $10$ (top, right), $12$ (bottom, left) and $14$ (bottom, right) using Algorithm \ref{alg_uniform_superposition}.}
\label{fig:usp_even}
\end{figure}

\subsection{Complexity analysis}
\label{sec:complexity_usp}

Let $ l_0$, $l_1$, $\ldots$, $l_k$, where $ M = \sum_{j=0}^{k} \, 2^{l_j} $ with $ 0 \leq l_0 < l_1 < \ldots < l_{k-1} < l_k \leq n-1 $. To obtain a uniform superposition of $M$ distinct states (where $ M \neq 2^r $ for any $ r \in \NN $) as  $\ket{\Psi} = \frac{1}{\sqrt{M}}  \,  \sum_{j=0}^{M-1} \, \ket{j}$, according to Algorithm \ref{alg_uniform_superposition}, we will need $l_k + 2k$ quantum gates (including 1 rotation ($R_Y(\theta)$) gate, $k$ Pauli-$X$ gates, $l_0$ Hadamard ($H$) gates (if $l_0 > 0$), $l_k - l_0$ controlled Hadamard  gates and $k-1$ controlled rotation $R_Y(\theta)$ gates. 
For the case where $M = 2^r$ for any $r \in \NN$, the uniform superposition state can be easily obtained using $k$ Hadamard gates. Hence, the number of elementary gates needed for creation of the uniform superposition state $\ket{\Psi_M} = \frac{1}{\sqrt{M}}  \,  \sum_{j=0}^{M-1} \, \ket{j}$ for any $M > 1$ and $M \in \NN$ is $O(n)$ or equivalently $O(\log_2 ~M)$. We note that the gate counts of each type (and the total number of gates) in the quantum circuits shown in \mfig{fig:usp_odd} and \mfig{fig:usp_even} are in agreement with the corresponding mathematical expressions given above. 

\begin{figure}[H]
\begin{center}
\includegraphics[width=0.9\textwidth]{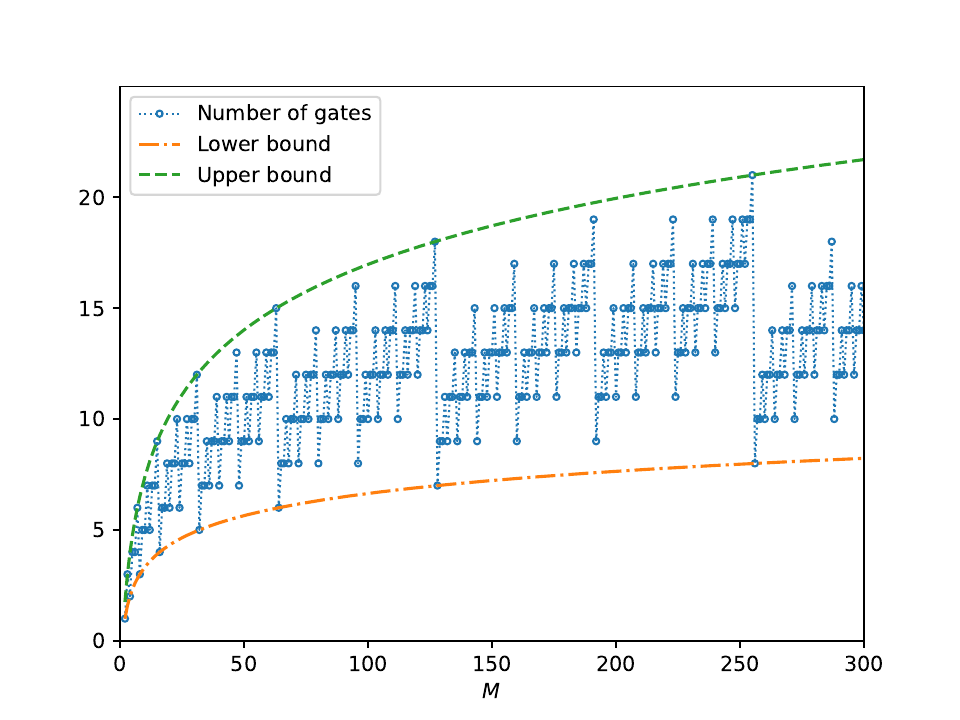}
\end{center}
\caption{Variation of the number of gates needed to obtain the uniform superposition state $\ket{\Psi} = \frac{1}{\sqrt{M}}\sum_{j=0}^{M-1} \ket{j}$ (according to Algorithm~\ref{alg_uniform_superposition}) with the number of distinct states $M$ in the uniform superposition state. Lower and upper bound curves are given by $\log_2~M$ (green dashes) and $3(\log_2 ~(M+1) - 1)$ (red dash-dots) respectively. }
\label{fig:ngates_algo1}
\end{figure}

 The dependence of the number of gates needed (i.e. $l_k + 2k$ gates as noted above) to obtain the uniform superposition state $\ket{\Psi} = \frac{1}{\sqrt{M}}\sum_{j=0}^{M-1} \ket{j}$ (according to Algorithm \ref{alg_uniform_superposition}) with the number of distinct states $M$ in the uniform superposition state is shown in \mfig{fig:ngates_algo1}. Bounds on the number of gates needed are also shown in this figure. The lower bound varies as $\log_2~M$ and is depicted by the red dash-dotted curve in \mfig{fig:ngates_algo1}. The upper bound varies as $3(\log_2~(M+1) - 1)$ as indicated by the green dashed curve. 
For any given $M$, the number of gates needed (according to Algorithm~\ref{alg_uniform_superposition} as indicated by the blue circles in \mfig{fig:ngates_algo1} is found to be bounded above and below by the green dashed curve and the red dash dotted curve respectively.
The number of gates needed to obtain the uniform superposition state $\ket{\Psi}= \frac{1}{\sqrt{M}}\sum_{j=0}^{M-1} \ket{j}$ is equal to the lower bound when $M=2^r$ for $r \in \NN$, as expected. A similar agreement between the number of gates needed and the upper bound occurs when $M = 2^r - 1$ for $r \in \NN$. 

Based on our proposed approach in Algorithm \ref{alg_uniform_superposition}, we observe that the preparation of the uniform superposition state $\ket{\Psi}$ can be equivalently obtained using only $O(\log_2 ~M)$ gates (including $O(\log_2 ~M)$ CNOT gates).
It follows from the fact that each of the controlled rotation gate and controlled Hadamard gates described in Algorithm \ref{alg_uniform_superposition} can be reconfigured using $O(1)$ CNOT gates. 
It can be verified from \mfig{fig:decompose}, which shows a method for constructing a controlled Hadamard gate using a single CNOT gate (at the top row), and the quantum circuit for constructing a controlled rotation gate ($R_Y(\theta)$) using two CNOT gates (at the bottom row). 

\begin{figure}[H]
  \begin{center}
       \includegraphics[width=0.65\textwidth]{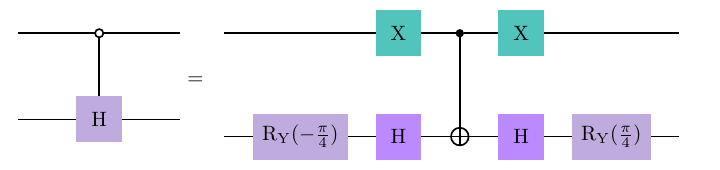} \\ \vspace{1cm}
      \includegraphics[width=0.65\textwidth]{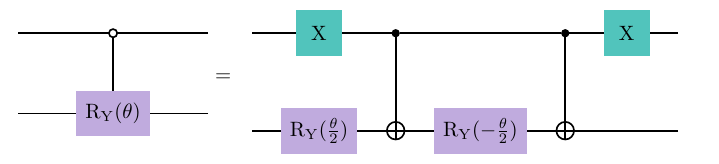}
  \end{center}
	 	\caption{
   The figure at the top illustrates the construction of a controlled Hadamard gate (with an open control, i.e., the Hadamard gate is applied on the target qubit if the control qubit is $\ket{0}$ and the identity operator is applied on the target qubit if the control qubit is $\ket{1}$.) using a CNOT gate and a few single-qubit gates. The figure at the bottom demonstrates the construction of a controlled rotation gate ($R_Y(\theta)$) (with an open control) using two CNOT gates and a few single-qubit gates.
   }
	 	\label{fig:decompose}
	 \end{figure}

\begin{table}[htbp]
  \caption{Comparison of the number of CNOT gates needed for the preparation of the uniform superposition states $\ket{\Psi} = \frac{1}{\sqrt{M}}  \,  \sum_{j=0}^{M-1} \, \ket{j} $ for several values of $M$ using our proposed approach and the state of the art implementation (Qiskit version 0.43.1). Table \ref{tab:quiskit_a}, \ref{tab:quiskit_b}, \ref{tab:quiskit_c} and \ref{tab:quiskit_d} show cases where $M$ is of the form $2^r-1$, $2^r + 2 $, $ 2^r + 1$, and $ 2^r - 2$, respectively, and in these cases, the number of CNOT gates needed using our approach is given by $3r-5$, $r-1$, $r$ and $3r-8$, respectively as shown in the third columns of the subtables.  Trends for the number of CNOT gates needed by Qiskit for these cases are shown in the fourth columns of the subtables. }  \label{tab:quiskit}
  \begin{subtable}[t]{0.5\linewidth}
     \centering
     \caption{Case: $M = 2^r -1$} \label{tab:quiskit_a}
    \begin{tabular}{cccc}
      \toprule
   r &  $M = 2^r -1 $ & $\substack{ \text{Proposed method} \\ \text{\#CNOTs  } = \, (3r-5)}$ & $\substack{\text{Qiskit} 
 \\ \text{\#CNOTs } = \, (2^r-2) }$\\
    \midrule
  2 &  3 & 1 & 2 \\
  3 &  7 & 4 & 6 \\
  4 &  15 & 7 & 14 \\
  5& 31 & 10 & 30 \\
  6 &  63 & 13 & 62 \\
  7&  127 & 16 & 126 \\
  8&  255 & 19 & 254 \\
   9& 511 & 22 & 510 \\
    10& 1023 & 25 & 1022 \\
    11& 2047 & 28 & 2046 \\
    12 & 4095 & 31 & 4094 \\
    13 & 8191 & 34 & 8190 \\
    14 & 16383 & 37 & 16382 \\
    15 & 32767 & 40 & 32766 \\
      \bottomrule
    \end{tabular}
  \end{subtable}%
  \begin{subtable}[t]{0.5\linewidth}
     \centering
   \caption{Case: $M = 2^r + 2$} \label{tab:quiskit_b}
    \raggedright{
    \begin{tabular}{cccc}
        \toprule
  $ r $ &  $M = 2^r +2  $ & $\substack{ \text{Proposed method} \\ \text{\#CNOTs  } = \, (r-1)}$ & $\substack{\text{Qiskit}   \\ \text{\#CNOTs } = \, (2^r + 2r-2) }$\\
    \midrule
 2&   6 & 1 & 6 \\
3&10 & 2 & 12 \\
4&18 & 3 & 22 \\
5&34 & 4 & 40 \\
6& 66 & 5 & 74 \\
7 &130 & 6 & 140 \\
8& 258 & 7 & 270 \\
9&514 & 8 & 528 \\
10&1026 & 9 & 1042 \\
11 & 2050 & 10 & 2068 \\
12& 4098 & 11 & 4118 \\
13 & 8194 & 12 & 8216 \\
14 & 16386 & 13 & 16410 \\
15 & 32770 & 14 & 32796 \\
      \bottomrule
    \end{tabular}}
  \end{subtable}
  
  \vspace{0.5cm}

  \begin{subtable}[t]{0.5\linewidth}
    \centering
   \caption{Case: $M = 2^r + 1$} \label{tab:quiskit_c}
    \begin{tabular}{cccc}
         \toprule
$ r $ &  $M = 2^r + 1  $ & $\substack{ \text{Proposed method} \\ \text{\#CNOTs  } = \, (r)}$ & $\substack{\text{Qiskit}   \\ \text{\#CNOTs } = \, (2r) }$\\
    \midrule
3 & 9 & 3 & 6 \\
4 & 17 & 4 & 8 \\
5 & 33 & 5 & 10 \\
6 & 65 & 6 & 12 \\
7 & 129 & 7 & 14 \\
8 & 257 & 8 & 16 \\
9 & 513 & 9 & 18 \\
10 & 1025 & 10 & 20 \\
11 & 2049 & 11 & 22 \\
12 & 4097 & 12 & 24 \\
13 & 8193 & 13 & 26 \\
14 & 16385 & 14 & 28 \\
15 & 32769 & 15 & 30 \\
      \bottomrule
    \end{tabular}
  \end{subtable}%
  \begin{subtable}[t]{0.5\linewidth}
   \caption{Case: $M = 2^r -2$} \label{tab:quiskit_d}
    \begin{tabular}{cccc}
        \toprule
   $ r $ &  $M = 2^r - 2  $ & $\substack{ \text{Proposed method} \\ \text{\#CNOTs  } = \, (3r -8)}$ & $\substack{\text{Qiskit}   \\ \text{\#CNOTs } = \, (2^r -2) }$\\
    \midrule
3& 6 & 1 & 6 \\
4 & 14 & 4 & 14 \\
5 & 30 & 7 & 30 \\
6 & 62 & 10 & 62 \\
7 &126 & 13 & 126 \\
8 &254 & 16 & 254 \\
9 & 510 & 19 & 510 \\
10 & 1022 & 22 & 1022 \\
11 & 2046 & 25 & 2046 \\
12 & 4094 & 28 & 4094 \\
13 & 8190 & 31 & 8190 \\
14 & 16382 & 34 & 16382 \\
15 & 32766 & 37 & 32766 \\
      \bottomrule
    \end{tabular}
  \end{subtable}
\end{table}

 Based on our proposed approach in Algorithm \ref{alg_uniform_superposition}, the fact that preparation of the uniform superposition state $\ket{\Psi} = \frac{1}{\sqrt{M}}\sum_{j=0}^{M-1} \ket{j}$  can be achieved using only $O(\log_2 ~M)$ gates (including $O(\log_2 ~M)$ CNOT gates)  
represents a very significant (exponential) reduction in gate complexity in comparison the state of the art implementation in Qiskit and Ref.~\cite{gleinig2021efficient} that requires $O(M)$ gates (including $O(M)$ CNOT gates). 
For instance, for $M=2^r-1$, the current Qiskit (Version 0.43.1) implementation (using Qiskit's transpile function) of the uniform superposition state $\ket{\Psi}$ is estimated to require $(2^r-2)$ CNOT gates, whereas our approach requires $(l_k - l_0)+2(k-1) = 3r -5$ CNOT gates (refer Table \ref{tab:quiskit_a}). 
For several other cases corresponding to different values of $M$, a comparison of the CNOT gate counts needed in our approach versus those required by Qiskit for the preparation of the uniform superposition states $\ket{\Psi} = \frac{1}{\sqrt{M}}  \,  \sum_{j=0}^{M-1} \, \ket{j} $  is shown in Table \ref{tab:quiskit}.
A graphical representation of the data provided in Table~\ref{tab:quiskit} is shown in 
\mfig{fig:mcases_all}. 
We note that for the cases shown in \ref{tab:quiskit_a}, \ref{tab:quiskit_b} and  \ref{tab:quiskit_d} our proposed approach offers an exponential reduction in the CNOT gate counts in comparison to the state of the art implementation in Qiskit. For the case shown in \ref{tab:quiskit_a}, the number of CNOT gates needed by
our approach is lower than the Qiskit implementation by a factor of $2$.  

A comparison of the number of CNOT gates required to prepare the uniform superposition states $\ket{\Psi}$ using our method and the Qiskit implementation for differnt values of $M$ (with $2 < M < 1024$ and $M \neq 2^r$ for any integer $r$) is presented in \mfig{fig:usp_cnot}.
It is clear from this comparison (presented in Table \ref{tab:quiskit}, \mfig{fig:mcases_all}, \mfig{fig:usp_cnot}) that our proposed approach achieves an exponential reduction in the number of CNOT gates needed by the Qiskit implementation in the general case. In a few cases, the Qiskit implementation does not depict an exponential increase in the number of CNOT gates with increasing $M$. These cases correspond to the extreme dips or valleys in the graph shown in \mfig{fig:usp_cnot} and occur when $M$ is of the form $M=2^r +1$ (also refer Table \ref{tab:quiskit_c}). In all cases, our proposed approach is superior to the corresponding Qiskit implementation.  

\begin{figure}[H]
\begin{center}
\includegraphics[width=\textwidth]{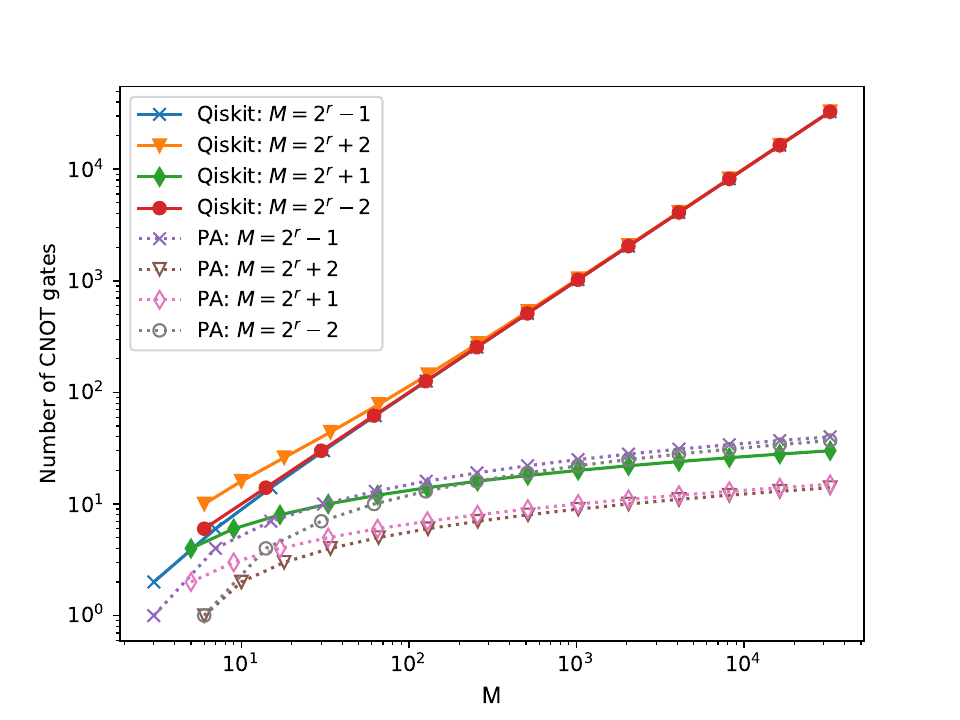}  
\end{center}
\caption{
Comparison of the number CNOT gates needed to obtain the uniform superposition states $\ket{\Psi} = \frac{1}{\sqrt{M}}\sum_{j=0}^{M-1} \ket{j}$, using our proposed approach (PA) versus Qiskit implementation for various cases of $M$ shown in Table~\ref{tab:quiskit}. It is evident that our proposed approach achieves an exponential improvement (in the number of CNOT gates) compared to the existing Qiskit implementation. 
}
\label{fig:mcases_all}
\end{figure}

\begin{figure}[H]
\begin{center}
\includegraphics[width=\textwidth]{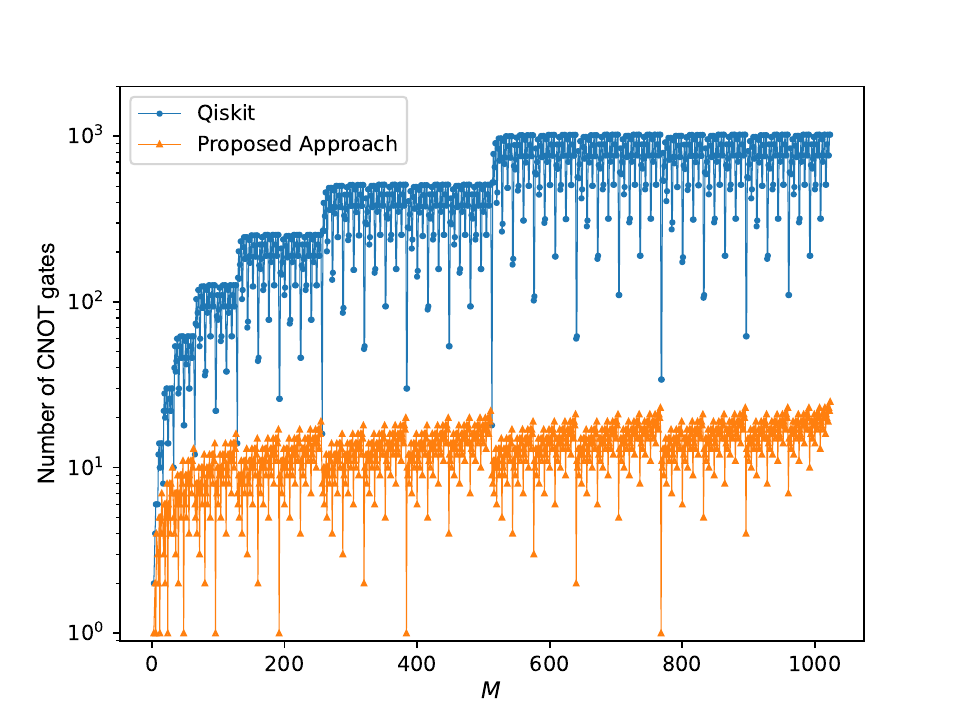}  
\end{center}
\caption{
Comparison of the number CNOT gates needed to obtain the uniform superposition states $\ket{\Psi} = \frac{1}{\sqrt{M}}\sum_{j=0}^{M-1} \ket{j}$, using our proposed approach versus Qiskit implementation for $  2 < M < 1024$. Note that $M=2^r$ cases for $r \in \NN$, are not shown in the figure as CNOT gates are not needed for these cases.}
\label{fig:usp_cnot}
\end{figure}

As noted earlier, the Quantum Byzantine Agreement (QBA) protocol~\cite{ben2005fast,mozafari2021efficient}, requires preparation of a uniform superposition state of the form
\begin{eqnarray} \label{eq:qba}
    \ket{\Phi} = \frac{1}{\sqrt{n^3}} \sum_{j=0}^{n^3 - 1} \ket{j} 
\end{eqnarray}
using $n$ qubits. This superposition state $\ket{\Phi}$ involves a uniform superposition over (the first) $M=n^3$ computational basis states out of a total of $2^n$ computational basis states.
Previous works in the literature~\cite{mozafari2021efficient} reported that preparation of such a state requires exponential CNOT gates in the worst case. 
Considering $n=20$ qubits, our approach for construction of a uniform superposition state $\ket{\Phi}$ shown in Eq.~\ref{eq:qba} requires 1 rotation gate, 5 Hadamard gates, $4$ controlled rotation gates and $6$ controlled Hadamard gates as illustrated in \mfig{fig:qba_n20}. 
Based on equivalence of controlled gates described in \mfig{fig:decompose}, we can infer that an equivalent circuit corresponding to \mfig{fig:qba_n20} would contain only 14 CNOT gates (along with a few single qubit gates). Similarly, for $n=18$, our approach based on Algorithm \ref{alg_uniform_superposition} needs $4$ controlled rotation gates and $9$ controlled Hadamard gates. This implies that only $17$ CNOT gates (along with a few single qubit gates) are needed for the construction of a uniform superposition state $\ket{\Phi}$ for $n=18$. This number is significantly (exponentially) lower than the estimates in Table~4 of Ref.~\cite{mozafari2021efficient}, where $2343$ CNOTs were needed for this case. 
\begin{figure}[H]
    \centering
\includegraphics[width=0.98\textwidth]{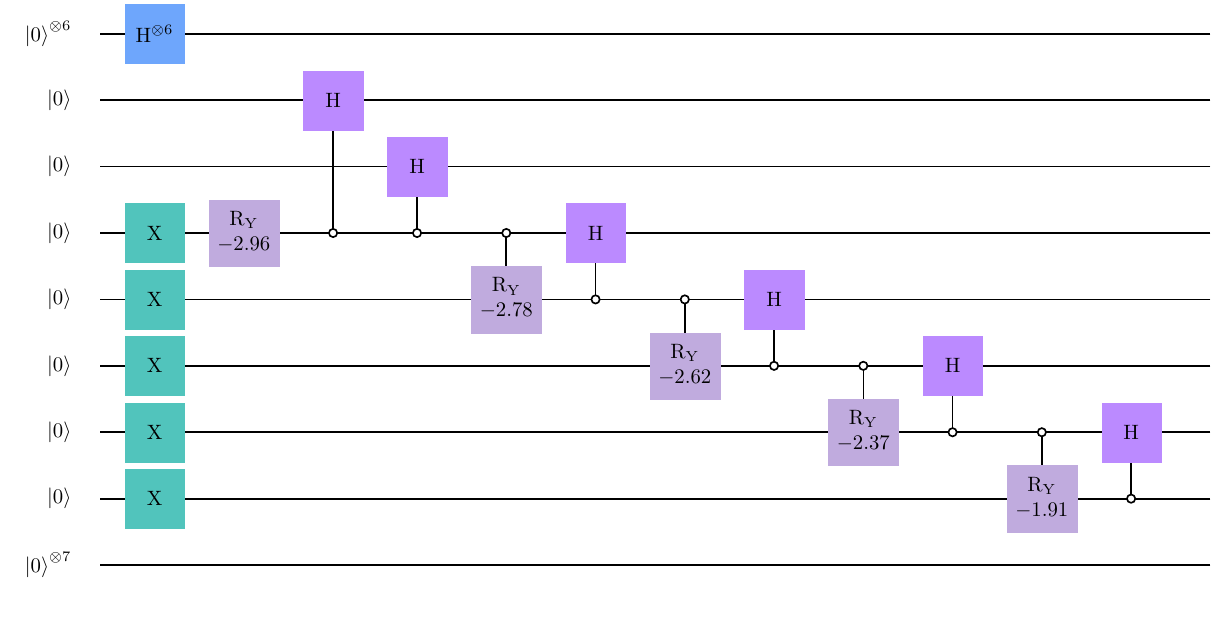}
    \caption{Quantum circuit to obtain the uniform superposition state, $    \ket{\Psi} = \frac{1}{\sqrt{n^3}} \sum_{j=0}^{n^3 - 1} \ket{j} $, with $n=20$ qubits (using Algorithm~\ref{alg_uniform_superposition}). The resulting state is relevant to the Quantum Byzantine Agreement (QBA) protocol.}
    \label{fig:qba_n20}
\end{figure}

\section{Nonuniform superposition}
\label{sec:nusp}

We note that in the quantum circuit created by Algorithm \ref{alg_uniform_superposition}, one rotation $R_Y(\theta)$ and gates $k-1$ controlled rotation gates were used. It is interesting to note that by changing the rotation angles for these gates many interesting quantum states can be created.

At the end of Algorithm \ref{alg_uniform_superposition}, with $ r=k $ in \meqref{eq_psi_algo_one_general}, the quantum state obtained is  
\begin{align} \label{eq_general_expression}
	\ket{\psi_{k-1}} = 
	\frac{b_0}{\sqrt{2^{l_0}}}  \sum_{j=0}^{2^{l_0} - 1} &\ket{j + M - 2^{l_0}} +  \frac{a_0 b_1 }{\sqrt{2^{l_1}}}  \sum_{j=0}^{2^{l_1} - 1} \ket{j + M - 2^{l_0} - 2^{l_1}}
	+ \frac{a_0 a_1 b_2 }{\sqrt{2^{l_2}}}   \sum_{j=0}^{2^{l_2} - 1} \ket{j + M - 2^{l_0} - 2^{l_1} - 2^{l_2} } \nonumber\\
	\cdots \cdots
	&+ \frac{a_0 a_1 \ldots a_{k-2} b_{k-1}}{\sqrt{2^{l_{k-1}}}} \sum_{j=0}^{2^{l_{k-1}} - 1} \ket{j + M - \sum_{s=0}^{k-1} \, 2^{l_s} }
	+ \frac{a_0 a_1 \ldots a_{k-2} a_{k-1} }{\sqrt{2^{l_{k}}}}\sum_{j=0}^{2^{l_{k}} - 1} \ket{  j + M - \sum_{s=0}^{k} \, 2^{l_s}}.
\end{align}
One can prepare the above more general quantum state by 
removing the restrictions on $\theta_0$ and $\theta_m$ (the rotation angles for the rotation and controlled rotation gates) in lines $8$ and $11$ in Algorithm \ref{alg_uniform_superposition}. Of course, the only constraint on the coefficient  $a_r$ and $b_r$ is the normalization requirement $\abs{a_r}^2 + \abs{b_r}^2 =1 $, for $r=0$ to $r=k-1$. In the following, the
quantum state $\ket{\psi_{k-1}}$ given in \meqref{eq_general_expression}, will be expressed in an alternate form and a few examples of such nonuniform superposition states will be considered.

Let \begin{align} \label{eq_short_def}
	\gamma_r  =
	\begin{cases}
		&  \frac{b_0}{\sqrt{2^{l_0}}} \quad \quad \quad \text{if } r=0, \\
		& \frac{a_0 a_1 \ldots a_{r-1} b_{r}}{\sqrt{2^{l_{r}}}} \quad \text{if }  0 <  r \leq k-1,  \\ 
		& \frac{a_0 a_1 \ldots a_{k-2} a_{k-1}} {\sqrt{2^{l_{k}}}} \quad \text{if }  r = k,
	\end{cases}
\end{align}
and
\begin{align} \label{eq_def_Gamma}
	\ket{\Gamma_r} = 	\sum_{j=0}^{2^{l_{r}} - 1} \ket{  j + M - \sum_{s=0}^{r} \, 2^{l_s}},
\end{align}
for $ r =0 $ to $ r=k $.
One can write \eqref{eq_general_expression} as \begin{align}
	\ket{\psi_{k-1}} = 	\sum_{r=0}^{k} \, \gamma_r \ket{\Gamma_r}.
\end{align}

In Algorithm \ref{alg_uniform_superposition}, the rotation angles for the rotation $R_Y(\theta)$ and controlled rotation gates were chosen such that the coefficients in the above expressions became equal, i.e., $ \gamma_i = \frac{1}{\sqrt{M}} $ for $ i=0 $ to $ i=k-1 $. In other words, 
\begin{align} \label{eq_coefficients_equal}
	\frac{ b_0 }{\sqrt{2^{l_0}}} = \frac{a_0 b_1 }{\sqrt{2^{l_1}}} =   \frac{a_0 a_1 b_2}{\sqrt{2^{l_2}}} = \frac{a_0 a_1 a_2 b_3}{\sqrt{2^{l_3}}}  = \cdots \cdots =  \frac{a_0 a_1 \ldots a_{k-2} b_{k-1}}{\sqrt{2^{l_{k-1}}}}  =  \frac{a_0 a_1 \ldots a_{k-2} a_{k-1} }{\sqrt{2^{l_{k}}}} = \frac{1}{\sqrt{M}}.
\end{align}
Therefore, the output of Algorithm \ref{alg_uniform_superposition} was a uniform superposition of $ M $ distinct states as desired. 
It is clear that if one or more of the above coefficients are made unequal (by changing the corresponding rotation angles for the rotation and control rotation gates), then one can obtain various combinations of nonuniform quantum states containing uniform quantum states as subsets of different sizes. In the following, some such examples will be considered.

\subsection{Example Circuits}	
\label{sec:nusp_examples}

\begin{example} \label{sec:nusp_examples_ex_one}
	Set $ b_i = a_i =  \frac{1}{\sqrt{2}} $ for $ i=0 $ to $ i = k-1 $, (or equivalently the rotation angle $ \theta = - \frac{\pi}{2} $ for all rotation and controlled rotation gates).  More precisely, in this case \meqref{eq_short_def} reduces to the following,
	\begin{align} \label{eq_short_def_example_one}
		\gamma_r  =
		\begin{cases}
			&  \frac{1}{\sqrt{2^{l_0+1}}} \quad \text{if } r=0, \\
			& \frac{1}{\sqrt{2^{l_{r} + r + 1 }}} \quad \text{if }  0 <  r \leq k-1,  \\ 
			& \frac{1} {\sqrt{2^{l_{k} +k}}} \quad \text{if }  r = k.
		\end{cases}
	\end{align}
Clearly, in this case, if $ i \neq j $, then $ \gamma_j \neq \gamma_j $, i.e., all the coefficients are distinct.
	Therefore, the quantum state obtained, as shown below, contains as subsets uniform
	 superposition of computational basis states of size $ 2^{l_0} $, $ 2^{l_1} $, $ \cdots $, $ 2^{l_{k-1}} $, and  $ 2^{l_{k}} $,
	 \begin{align}
	 	\sum_{r=0}^{k} \, \gamma_r \ket{\Gamma_r},
	 \end{align}
	 where  $ \ket{\Gamma_r} $ and $ \gamma_r $ are defined in \meqref{eq_def_Gamma} and  \meqref{eq_short_def_example_one}, respectively.

	 In the quantum circuit shown in \mfig{fig:nonuniform_ex_one}, the case $ M=15  $ is considered. As $ M = 15 = 2^0 + 2^1 + 2^2 + 2^3 $, in this case $ l_0 = 0 $, $ l_1 =1$, $ l_2=2 $ and $ l_3 =3 $, with $k=3$. In the quantum circuit in \mfig{fig:nonuniform_ex_one}, one $ R_Y (\theta) $ rotation gate and two controlled rotations gates are used with the rotation angle $ \theta = - \frac{\pi}{2} $ for each of these gates. Using \meqref{eq_short_def_example_one} one can obtain 
	 \begin{align}
	 	\gamma_0 = \frac{1}{\sqrt{2}}, \, \gamma_1 = \frac{1}{\sqrt{8}}, \, \gamma_2 = \frac{1}{\sqrt{32}} \, \text{ and }\, \gamma_3 = \frac{1}{\sqrt{64}}.
	 \end{align}
	 It follows from \meqref{eq_def_Gamma} that the output quantum state obtained  is   
	 \begin{align}\label{eq_state_vector_ex_one}
	 	\frac{1}{\sqrt{2}} \left( \ket{14} \right) +  	\frac{1}{\sqrt{8}} \left( \ket{12} + \ket{13} \right) + 	\frac{1}{\sqrt{32}} \left( \ket{8} + \ket{9} + \ket{10} + \ket{11} \right)  +  \frac{1}{\sqrt{64}} \left( \ket{0} + \ket{1} + \ket{2} + \ket{3} + \ket{4} + \ket{5} + \ket{6} + \ket{7}    \right). 
	 \end{align}
    The above was verified using IBM's Qiskit simulation environment. A histogram of sampling probabilities of obtaining various computational basis states is presented on the right side of \mfig{fig:nonuniform_ex_one}.
	 
	 \begin{figure}[H]
  \begin{center}
       \includegraphics[width=0.47\textwidth]{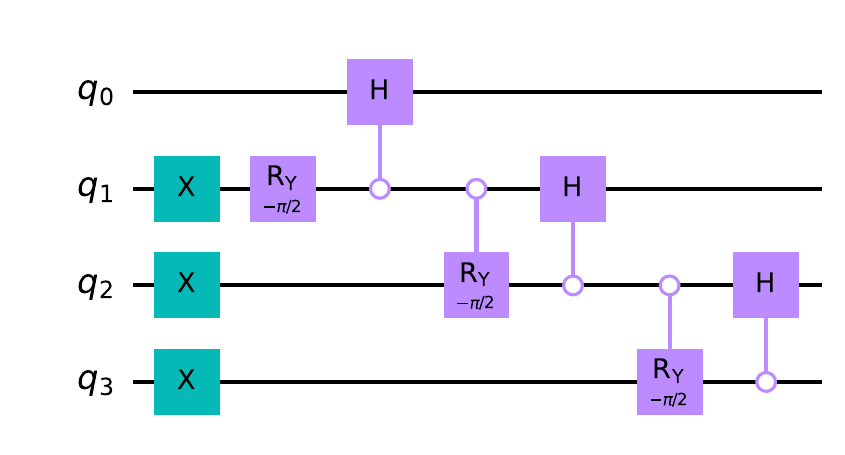}
        \includegraphics[width=0.52\textwidth]{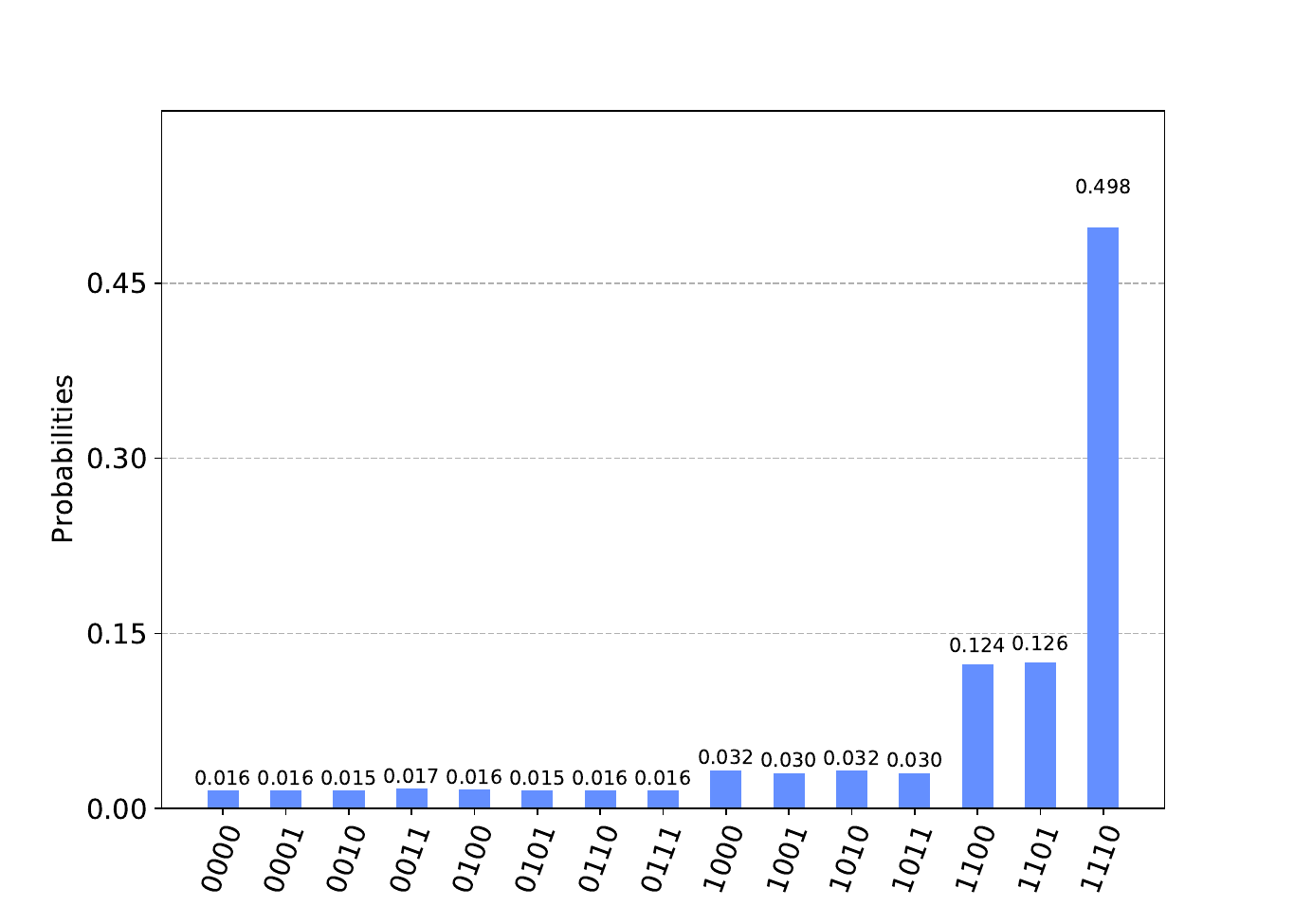}
  \end{center}
	 	\caption{A quantum circuit for creating the nonuniform quantum state given in \meqref{eq_state_vector_ex_one} is shown on the left. This corresponds to the case $ M=15 $ in Example \ref{sec:nusp_examples_ex_one}. A histogram representing the sampling probabilities of obtaining various computational basis states is shown on the right.}
	 	\label{fig:nonuniform_ex_one}
	 \end{figure}
	 
	\end{example}

\begin{example} \label{sec:nusp_examples_ex_two}
Suppose $a_s =0 $. For this case, the rotation angle $ \theta_s  =0 $ for the corresponding controlled rotation  $ R_Y(\theta_s) $ gate), where $ 0 < s \leq k-1 $. 
We also assume that all the other rotation angles in Algorithm \ref{alg_uniform_superposition} remain unchanged (i.e.,  $ \theta_r = - 2 \arccos \left(\sqrt {\frac{2^{l_r}}{M-\sum_{j=0}^{r-1} 2^{l_j} }}\right)  $ for  $ r \neq  s$). Since $ a_s $ is a factor of $ \gamma_j $ for $ j \geq s +1 $, it is clear that $ \gamma_j = 0 $ for $ j  \geq s +1 $. Also,  if  $ j < s $ then $ \gamma_j  $ remains the same as in  Algorithm \ref{alg_uniform_superposition}, i.e.,  $ \gamma_r = \frac{1}{\sqrt{M}}$ for $ 0 \leq r  < s $. Further, if $a_s =0 $, then $ b_s=1 $. Therefore, 
\begin{align}\label{eq_def_gamma_s_minus_one}
	\gamma_{s} = \frac{ a_0 a_1 \cdots a_{s-1}}{\sqrt{2^{l_s}}} = \sqrt{\frac{M- \sum_{j=0}^{s-1} 2^{l_j}}{M 2^{l_s}}}.
\end{align}
Therefore, in this case the following quantum state is obtained,
 \begin{align}
	\sum_{r=0}^{s} \, \gamma_r \ket{\Gamma_r} =  \frac{1}{\sqrt{M}} \left(\sum_{r=0}^{s-1} \,  \ket{\Gamma_r} \right) + 	\gamma_{s} \ket{\Gamma_{s}},
\end{align}
where  $ \ket{\Gamma_r} $ is defined in \meqref{eq_def_Gamma} and  $ \gamma_{s} $ is defined in \meqref{eq_def_gamma_s_minus_one}.

 In quantum circuit shown in \mfig{fig:nonuniform_ex_TWO}, the case of $ M=31 $ and $ s=2 $ is considered. As $ M = 31 = 2^0 + 2^1 + 2^2 + 2^3 +2^4 $, in this case $ l_0 = 0 $, $ l_1 =1$, $ l_2=2 $, $ l_3 =3 $, and $l_4 =4$ with $k=4$. Since, $ s =2 $, it means $ a_2 = 0  $ (or equivalently  $ \theta_2  =0 $ for the controlled rotation gate). It follows from the discussion above that
 \begin{align*}
 	\gamma_0 = \gamma_1 = \frac{1}{\sqrt{31}}, \gamma_2  = \sqrt{\frac{7}{31}}, \text{ and } \gamma_3 = \gamma_4 = 0,  
 \end{align*}
and the output quantum state obtained is
\begin{align} \label{eq_state_vector_ex_two}
	\frac{1}{\sqrt{31}} \left(\ket{30}\right)  + 	\frac{1}{\sqrt{31}} \left(\ket{28} + \ket{29}\right) + \sqrt{\frac{7}{31}} \left(\ket{24} + \ket{25} + \ket{26} + \ket{27} \right).
\end{align}
The quantum circuit depicted on the left side of \mfig{fig:nonuniform_ex_TWO} was created and executed within IBM's Qiskit simulation environment. 
The results obtained were confirmed to be correct. A histogram of sampling probabilities of obtaining various computational basis states is presented on the right side of \mfig{fig:nonuniform_ex_one}.
 
\begin{figure}[H]
  \begin{center}
      \includegraphics[width=0.47\textwidth]{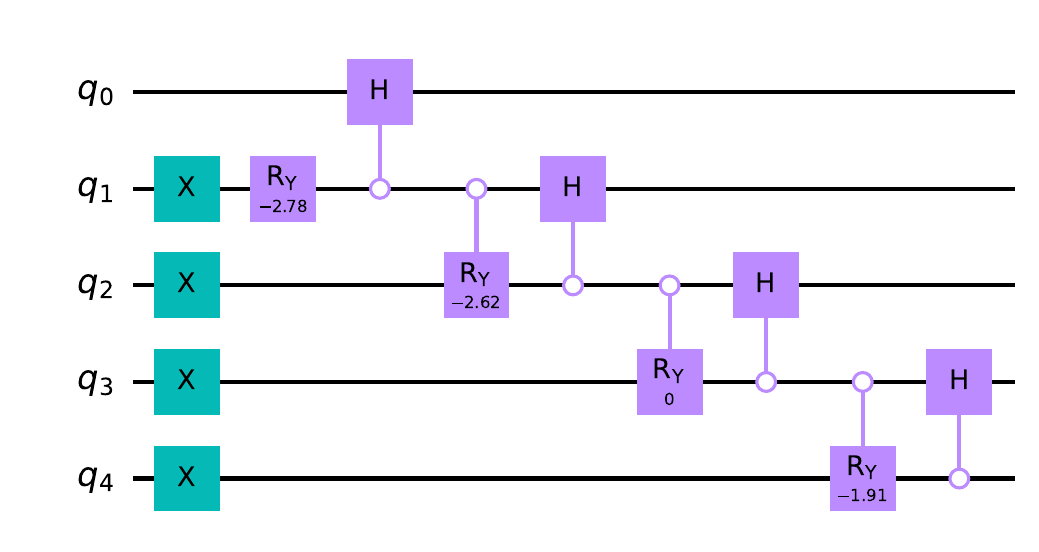} 
            \includegraphics[width=0.52\textwidth]{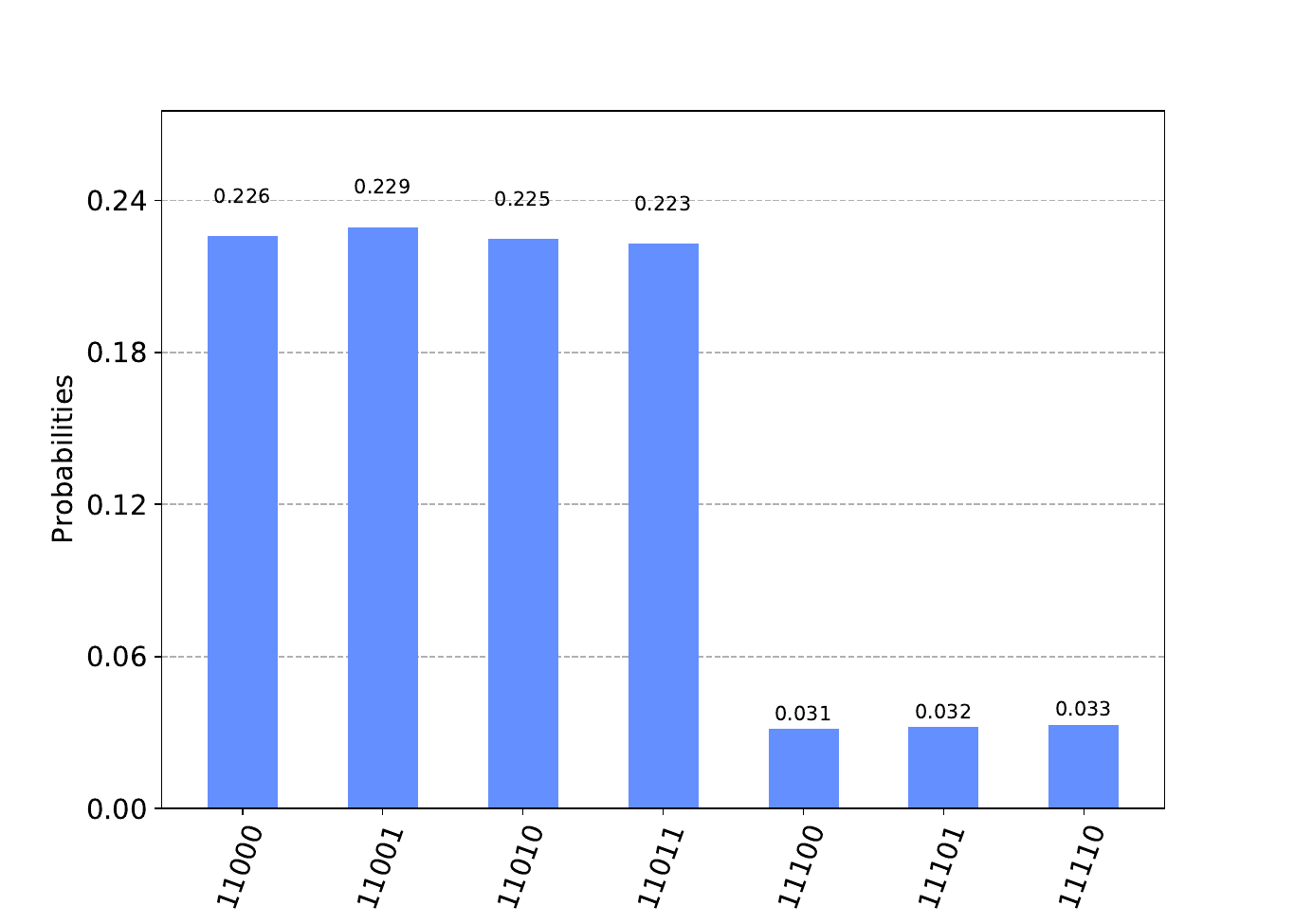}
  \end{center}
	\caption{A quantum circuit for creating the nonuniform quantum state given in \meqref{eq_state_vector_ex_two} is shown on the left. This corresponds to the case $ M=31 $ and $ s=2 $ in Example \ref{sec:nusp_examples_ex_two}. A histogram representing the sampling probabilities of obtaining various computational basis states is shown on the right.}
		\label{fig:nonuniform_ex_TWO}
\end{figure}
\end{example}

\begin{example}\label{sec:nusp_examples_ex_three}
	Suppose $b_s =0 $. For this case, the rotation angle $ \theta_s  = - \pi $ for the corresponding  controlled rotation gate $ R_Y (\theta_s) $, where  $ 0 < s \leq k-1 $. 
	Similar to the previous example, we also assume that all the other rotation angles in Algorithm \ref{alg_uniform_superposition} remain unchanged (i.e.,  $ \theta_r = - 2 \arccos \left(\sqrt {\frac{2^{l_r}}{M-\sum_{j=0}^{r-1} 2^{l_j} }}\right)  $ for  $ r \neq  s$).
	It is clear that in this case 	$ \gamma_s = 0 $, as $ b_s $ is a factor of $ \gamma_s $. 
A simple calculation shows that the following quantum state is obtained in this case,
\begin{align}
\frac{1}{\sqrt{M}}	\left(\sum_{r=0}^{s-1} \, \ket{\Gamma_r}\right) +  \sqrt {\frac{M -    \sum_{j=0}^{s-1} 2^{l_j} } {M \left(M -    \sum_{j=0}^{s} 2^{l_j} \right)}}  \left(	\sum_{r=s+1}^{k-1} \,  \ket{\Gamma_r} \right). 
\end{align}
An easy computation shows that, for $ M =15 $ and $ s =2 $, the quantum state obtained is
\begin{align} \label{eq_state_vector_ex_three}
	\frac{1}{\sqrt{15}}	\left( \ket{\Gamma_0}  + \ket{\Gamma_1} \right) +  	\frac{1}{\sqrt{10}}   (\ket{\Gamma_3} + \ket{\Gamma_4} ) 
		 =	\frac{1}{\sqrt{15}} \left( \ket{12} + \ket{13} + \ket{14} \right)  +  \frac{1}{\sqrt{10}} \left( \ket{0} + \ket{1} + \ket{2} + \ket{3} + \ket{4} + \ket{5} + \ket{6} + \ket{7}    \right). 
\end{align}
  The above was verified using the quantum circuit shown on the left side of \mfig{fig:nonuniform_ex_three} in IBM's Qiskit simulation environment. A histogram of sampling probabilities of obtaining various computational basis states is presented on the right side of \mfig{fig:nonuniform_ex_three}.
 
	\begin{figure}[H]
		\includegraphics[width=0.47\textwidth]{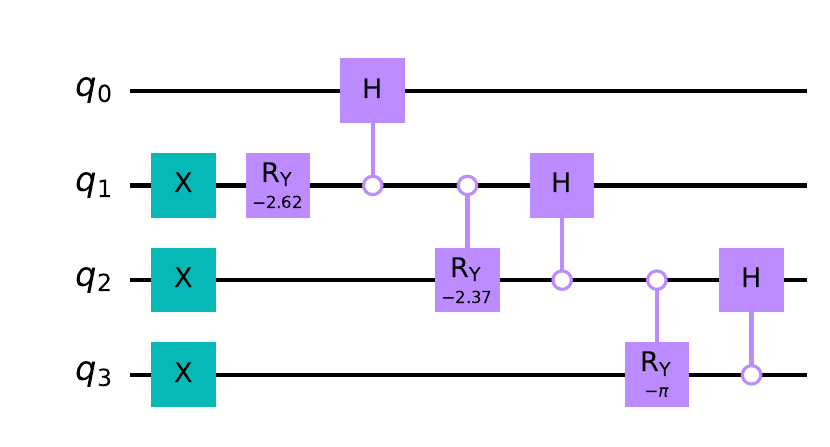} 
         \includegraphics[width=0.52\textwidth]{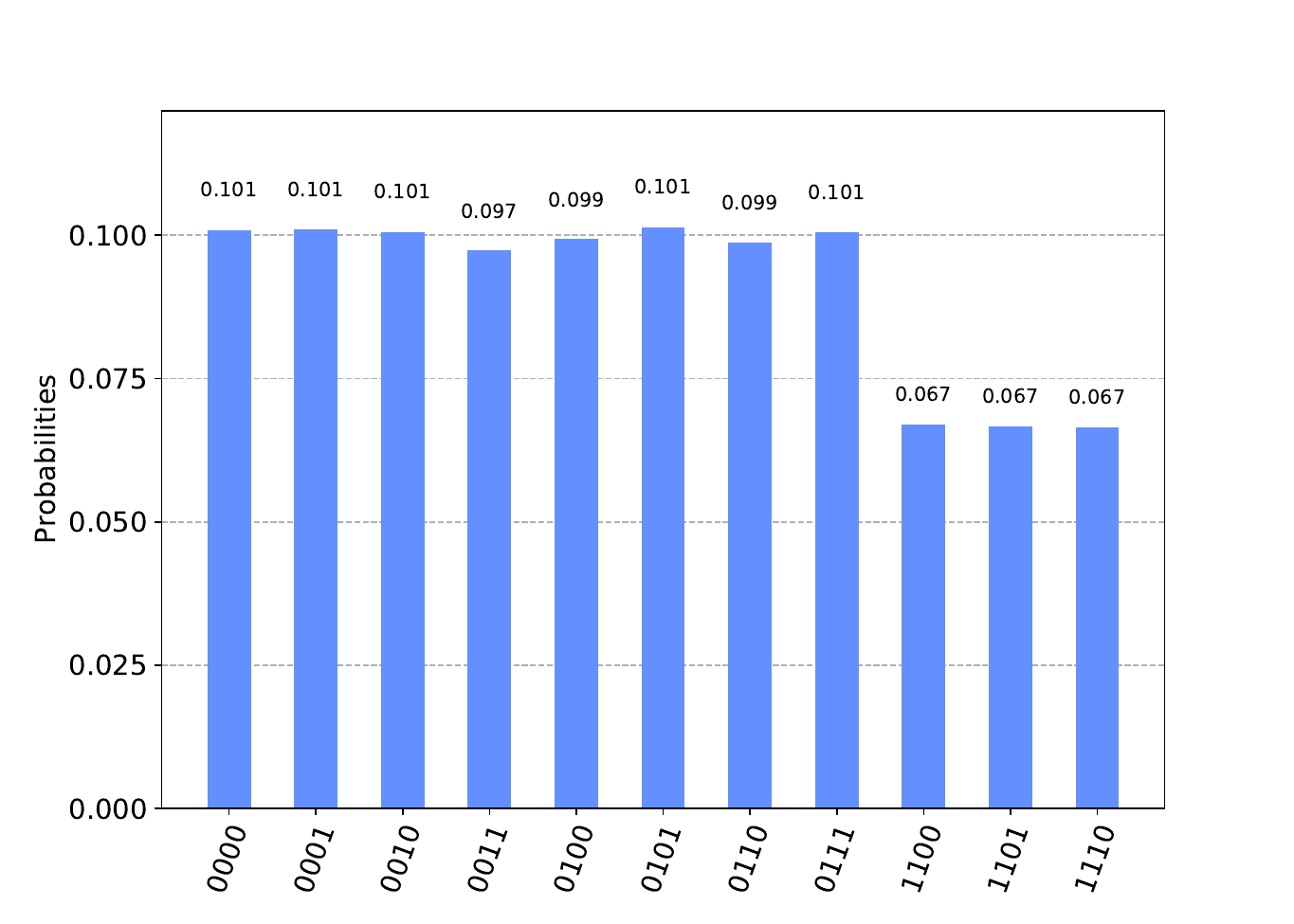}
		\caption{A quantum circuit for creating the nonuniform quantum state given in \meqref{eq_state_vector_ex_three} is shown on the left. This corresponds to the case $ M=15 $ and $ s=2 $ in Example \ref{sec:nusp_examples_ex_three}. A histogram representing the sampling probabilities of obtaining various computational basis states is shown on the right.}
		\label{fig:nonuniform_ex_three}
	\end{figure}	
\end{example}

\section{Conclusion}
\label{sec:conclusion}

In this paper, we proposed an efficient solution to the problem of quantum state preparation involving a uniform superposition over a non-empty subset of $n$-qubit computational basis states. 
The uniform superposition state considered was of the form $\ket{\Psi} = \frac{1}{\sqrt{M}}\sum_{j = 0}^{M - 1} \ket{j}$, where $M$ denotes the number of distinct states in the superposition state and $2 \leq M \leq 2^n$. 
%
We showed that this uniform superposition state  $\ket{\Psi} $,
can be created (based on Algorithm~\ref{alg_uniform_superposition}) using only $O(\log_2 ~M)$ elementary quantum gates. This represents a significant (exponential) reduction in gate complexity in comparison to previous works~\cite{gleinig2021efficient, mozafari2021efficient, Qiskit}. 
In addition to gate complexity, the circuit depth associated with creation of the uniform superposition state was also found to be $O(\log_2 ~M)$.
Further, only $n=\ceil{\log_2 ~M}$ qubits are needed for preparation of the uniform superposition state $\ket{\Psi}$  for arbitrary $M$. Note that no ancillary qubits are needed in our approach. 
Moreover, our approach (in Algorithm~\ref{alg_uniform_superposition}) does not require  controlled quantum gates with multiple controls.
Only appropriate combinations of single qubit gates (namely Pauli X gates, Hadamard gates, rotation ($R_Y (\theta )$) gates) and controlled gates with a single control  (namely controlled Hadamard gates and controlled rotation gates) are used.
Mathematical expressions for the number of gates of each type, total number of gates, along with lower and upper bounds are presented in \mref{sec:complexity_usp}. Note that the controlled Hadamard gates and controlled rotation gates can be implemented using CNOT gates and a few single qubit gates. Comparisons presented in Table \ref{tab:quiskit}, \mfig{fig:mcases_all} and \mfig{fig:usp_cnot} demonstrate that in the general case, our proposed approach achieves an exponential reduction in the number of CNOT gates compared to the existing Qiskit implementation.

Further, we showed (in \mref{sec:nusp}) that the same quantum circuit configuration used for creating uniform superposition state  $\ket{\Psi} $, described above, can also be used to create a broad class of nonuniform superposition states or mixed states. In such a class of nonuniform superposition states, multiple uniform superpositions are allowed to occur over different subsets of the computational basis states. In other words, for a given $M$, the same quantum circuit configuration (as the one used to generate the uniform superposition state $\ket{\Psi} $) can be used with appropriately modified rotation angles associated with rotation gates and controlled rotation gates to generate special partitions of $M$ quantum computational basis states into multiple subsets where the amplitudes are constant within each subset but can vary across subsets.  
Hence a broad class of nonuniform superposition states can also be efficiently prepared with a gate complexity and circuit depth of $O(\log_2~M)$ using only $n=\ceil{\log_2 ~M}$ qubits.

It is anticipated that our proposed approaches for the efficient preparation of uniform superposition states (and also selected nonuniform superposition states) over subsets of computational basis states will be useful in many applications in areas such as cryptography, error correcting codes, quantum solution of linear system of equations, quantum solution of differential equations and quantum machine learning, among others.

\section{Disclaimer}
Unfortunately, several false claims and misinformation have appeared in the media regarding the relationship between the authors of this work. Both authors deny the factually incorrect claims that the first author, Alok Shukla, is a former student of the second author, Prakash Vedula, or that Prakash Vedula led the research team contributing to this work. The authors' contributions have been clarified in the ``Author Contributions'' section at the beginning of the document.


\begin{thebibliography}{10}

\bibitem{nielsen2002quantum}
Michael~A. Nielsen and Isaac Chuang.
\newblock {\em Quantum {C}omputation and {Q}uantum {I}nformation}.
\newblock Cambridge University Press, 2000.

\bibitem{wittek2014quantum}
Peter Wittek.
\newblock {\em Quantum machine learning: what quantum computing means to data
  mining}.
\newblock Academic Press, 2014.

\bibitem{kieferova2019simulating}
M{\'a}ria Kieferov{\'a}, Artur Scherer, and Dominic~W Berry.
\newblock Simulating the dynamics of time-dependent {H}amiltonians with a
  truncated {D}yson series.
\newblock {\em Physical Review A}, 99(4):042314, 2019.

\bibitem{low2017optimal}
Guang~Hao Low and Isaac~L Chuang.
\newblock Optimal {H}amiltonian simulation by quantum signal processing.
\newblock {\em Physical review letters}, 118(1):010501, 2017.

\bibitem{berry2015simulating}
Dominic~W Berry, Andrew~M Childs, Richard Cleve, Robin Kothari, and Rolando~D
  Somma.
\newblock Simulating {H}amiltonian dynamics with a truncated {T}aylor series.
\newblock {\em Physical review letters}, 114(9):090502, 2015.

\bibitem{childs2017quantum}
Andrew~M Childs, Robin Kothari, and Rolando~D Somma.
\newblock Quantum algorithm for systems of linear equations with exponentially
  improved dependence on precision.
\newblock {\em SIAM Journal on Computing}, 46(6):1920--1950, 2017.

\bibitem{wiebe2012quantum}
Nathan Wiebe, Daniel Braun, and Seth Lloyd.
\newblock Quantum algorithm for data fitting.
\newblock {\em Physical review letters}, 109(5):050505, 2012.

\bibitem{childs2020quantum}
Andrew~M Childs and Jin-Peng Liu.
\newblock Quantum spectral methods for differential equations.
\newblock {\em Communications in Mathematical Physics}, 375(2):1427--1457,
  2020.

\bibitem{SHUKLA2023127708}
Alok Shukla and Prakash Vedula.
\newblock A hybrid classical-quantum algorithm for solution of nonlinear
  ordinary differential equations.
\newblock {\em Applied Mathematics and Computation}, 442:127708, 2023.

\bibitem{Shukla_Vedula_2022b}
Alok Shukla and Prakash Vedula.
\newblock A hybrid classical-quantum algorithm for digital image processing.
\newblock {\em Quantum Information Processing}, 22(3):19, Dec 2022.

\bibitem{deutsch1992rapid}
David Deutsch and Richard Jozsa.
\newblock Rapid solution of problems by quantum computation.
\newblock {\em Proceedings of the Royal Society of London. Series A:
  Mathematical and Physical Sciences}, 439(1907):553--558, 1992.

\bibitem{bernstein1993quantum}
Ethan Bernstein and Umesh Vazirani.
\newblock Quantum complexity theory.
\newblock In {\em Proceedings of the twenty-fifth annual ACM symposium on
  Theory of computing}, pages 11--20, 1993.

\bibitem{Shukla_Vedula_PBV_2023}
Alok Shukla and Prakash Vedula.
\newblock A generalization of {B}ernstein–{V}azirani algorithm with multiple
  secret keys and a probabilistic oracle.
\newblock {\em Quantum Information Processing}, 22(244):18, 2023.

\bibitem{grover1997quantum}
Lov~K Grover.
\newblock Quantum mechanics helps in searching for a needle in a haystack.
\newblock {\em Physical Review Letters}, 79(2):325, 1997.

\bibitem{shukla2019trajectory}
Alok Shukla and Prakash Vedula.
\newblock Trajectory optimization using quantum computing.
\newblock {\em Journal of Global Optimization}, 75(1):199--225, 2019.

\bibitem{simon1997power}
Daniel~R Simon.
\newblock On the power of quantum computation.
\newblock {\em SIAM journal on computing}, 26(5):1474--1483, 1997.

\bibitem{shor1999polynomial}
Peter~W Shor.
\newblock Polynomial-time algorithms for prime factorization and discrete
  logarithms on a quantum computer.
\newblock {\em SIAM review}, 41(2):303--332, 1999.

\bibitem{Brassard_2002}
Gilles Brassard, Peter H{\o}yer, Michele Mosca, and Alain Tapp.
\newblock Quantum amplitude amplification and estimation, 2002.

\bibitem{ben2005fast}
Michael Ben-Or and Avinatan Hassidim.
\newblock Fast quantum {B}yzantine agreement.
\newblock In {\em Proceedings of the thirty-seventh annual ACM symposium on
  Theory of computing}, pages 481--485, 2005.

\bibitem{gleinig2021efficient}
Niels Gleinig and Torsten Hoefler.
\newblock An efficient algorithm for sparse quantum state preparation.
\newblock In {\em 2021 58th ACM/IEEE Design Automation Conference (DAC)}, pages
  433--438. IEEE, 2021.

\bibitem{mozafari2021efficient}
Fereshte Mozafari, Heinz Riener, Mathias Soeken, and Giovanni De~Micheli.
\newblock Efficient {B}oolean methods for preparing uniform quantum states.
\newblock {\em IEEE Transactions on Quantum Engineering}, 2:1--12, 2021.

\bibitem{Qiskit}
{Qiskit contributors}.
\newblock Qiskit: An open-source framework for quantum computing, 2023.

\end{thebibliography}
	\bibliographystyle{unsrt}

\end{document}